\documentclass[12pt, letterpaper]{article}

\pdfoutput=1
\usepackage{graphicx}
\usepackage{amsmath}
\usepackage[margin=1in]{geometry}
\usepackage[stretch=10]{microtype}
\usepackage{amssymb}
\usepackage{booktabs}
\usepackage{array}
\usepackage[utf8]{inputenc}
\usepackage{color}
\usepackage{bm}
\usepackage{multirow}

\usepackage{longtable}
\usepackage{wrapfig}
\usepackage{float}
\usepackage{colortbl}
\usepackage{pdflscape}
\usepackage{tabu}
\usepackage{threeparttable}
\usepackage{threeparttablex}
\usepackage[normalem]{ulem}
\usepackage{makecell}
\usepackage{xcolor}
\usepackage{blkarray}
\usepackage{bigstrut}

\usepackage{algorithm}
\usepackage{algpseudocode}

\usepackage[affil-it]{authblk}

\usepackage{nomencl}
\makenomenclature

\usepackage{amsthm}
\newtheorem{theorem}{Theorem}[section]
\newtheorem{lemma}[theorem]{Lemma}

\algnewcommand{\Initialize}[1]{%
  \State \textbf{Initialize:}
  \Statex \hspace*{\algorithmicindent}\parbox[t]{.8\linewidth}{\raggedright #1}
}

\usepackage{lmodern}
  \usepackage[T1]{fontenc}
  \usepackage{textcomp} 
  \usepackage{iftex}

\usepackage{amssymb}
  \providecommand{\keywords}[1]
{
  \small	
  \textbf{\textit{Keywords---}} #1
}


\title{Bayesian Event Categorization Matrix Approach for Explosion Monitoring}

\author[1]{Scott Koermer \thanks{Corresponding author, skoermer@lanl.gov}}
\author[2]{Joshua D. Carmichael}
\author[1]{Brian J. Williams}

\affil[1]{Statistical Sciences Group, Los Alamos National Laboratory, Los Alamos, NM, USA}
\affil[2]{National Security Earth Science Group, Los Alamos National Laboratory, Los Alamos, NM, USA}

\date{\today}

\usepackage[citestyle=authoryear,natbib=true,backend=bibtex]{biblatex}
\bibliography{bib.bib}


\usepackage{etoolbox}
\renewcommand\nomgroup[1]{%
  \item[\bfseries
  \ifstrequal{#1}{B}{Model parameters}{%
  \ifstrequal{#1}{P}{Prior hyperparameters}{%
  \ifstrequal{#1}{A}{Algorithms}{%
  \ifstrequal{#1}{T}{Decision theory}{%
  \ifstrequal{#1}{D}{Discriminants}{%
  \ifstrequal{#1}{M}{Mathematical concepts}{%
  \ifstrequal{#1}{O}{Other}{%
  \ifstrequal{#1}{N}{Distributions}{%
  \ifstrequal{#1}{I}{Dimension and Indexing}{}}}}}}}}}%
]}




\begin{document}
\maketitle


\begin{abstract}
Current efforts to correctly categorize natural events from suspected explosion sources with data that is collected by ground- or space-based sensors presents historical challenges that remain unaddressed by the Event Categorization Matrix (ECM) model. Smaller historical events (lower yield explosions) often include only sparse observations among few modalities and can therefore lack a complete set of discriminants. The covariance structures can also vary significantly between such observations of event (source-type) categories. Both obstacles are problematic for the ``classic'' Event Categorization Matrix model. Our work addresses this gap and presents a Bayesian update to the previous Event Categorization Matrix model, termed the Bayesian Event Categorization Matrix model, which can be trained on partial observations and does not rely on a pooled covariance structure. We further augment the Event Categorization Matrix model with Bayesian Decision Theory so that false negative or false positive rates of an event categorization can be reduced in an intuitive manner. To demonstrate improved categorization rates for the Bayesian Event Categorization Matrix model, we compare an array of Bayesian and classic models with multiple performance metrics using Monte Carlo experiments. We use both synthetic and real data. Our Bayesian models show consistent gains in overall accuracy and a lower false negative rates relative to the classic Event Categorization Matrix model. We propose future avenues to improve Bayesian Event Categorization Matrix models for further improving decision-making and predictive capability.
\end{abstract}

\keywords{
    Bayesian inference,
    Monte Carlo methods,
    Probability distributions,
    Statistical methods
  }

\section{Introduction}

%
Statistical methods have historically supported monitoring signatures of suspected conventional and nuclear explosions \citep{Bowers2009_1, Anderson2010_1}. Explosion monitoring researchers, in particular, leverage such methods to confirm or challenge the hypotheses that some geophysical events present evidence of a detonation, rather than natural processes \citep{National2012_1, Mcgrath2009_1}. Such methods have been crucial in seismic source identification, that is, statistical methods that screen explosion-sourced records of seismic activity from records expected from earthquakes or other processes. 

Early quantitative work \citep{Booker1964_1} that developed classification methods to separate explosions from earthquake populations (discrimination) in the 1960s justified later efforts to rigorously defend test-ban treaties \citep{Ericsson1970_1}. Some of these treaties were only aspirational at the time that the geophysical work was achieved \citep{Myers1972_1, Elvers1974_1}. More modern efforts from researchers like Shumway \citep{Shumway1984_1, Shumway1996_1, Shumway2001_1}, Anderson \citep{Anderson2010_2, Anderson2010_1, Anderson2014_1, Anderson2009_1}, and their coworkers \citep{Jih1990_1} have further advanced these statistical methods beyond discrimination. Such newer methods often ingest multiple discriminants \citep{Fah2002_1, anderson2007mathematical} or other modalities \citep{Redman2019_1} in statistical tests that screen all-source explosions from earthquakes. Some recent research has continued the trend to use data integration or fusion methods to more confidently detect \citep{Scoles2020_1, Carmichael2020_1, Carmichael2016_1}, identify \citep{Taylor2010_1, Arrowsmith2013_1}, and characterize \citep{Ford2021_1, Ford2014_1, Green2013_1, Williams2021_1} populations of explosions from other events, and thereby reduce false positive rates. This effort continues to focus on smaller, evasively conducted explosions \citep{koper2020importance, Rodd2023_1}.

One such multi-discriminant, statistical method that supports explosion monitoring is called the Event Categorization Matrix \citep[ECM,][]{anderson2007mathematical}. This method has historically been used to test if multiple observations associated with a single event support the hypothesis that their source was a conventional or nuclear detonation \citep{maceira2017trends}. The ECM method currently consumes seismic discriminants like event depth, ratios of body-wave and surface-wave magnitudes, and ground motion polarity. Some variants of ECM also leverage more novel discriminants, like infrasound phase arrivals and teleseismic waveform complexity factors (CFs) \citep{Anderson2008_1}. Such populations of discriminants that are sourced by detonations largely segregate from populations that are sourced by nuisance events that form other categories, like shallow earthquakes and deep earthquakes. 

The ECM method assumes that a vector of discriminant observations can be modeled as a random variable from a multivariate normal distribution, with a mean and covariance matrix specific to a single event category. The ECM method uses previous observations with known event categories (ground truth data) to estimate mean and covariance parameters for each event category distribution, and applies regularized discriminant analysis \citep[RDA,][]{friedman1989regularized} to estimate covariance matrices for small data sets. The ECM model then categorizes a new observation with a series of hypothesis tests that are based on typicality indices \citep{mclachlan2005discriminant}, and quantifies the likely set membership of the new, uncategorized, observation to each candidate event group. If a new observation is atypical of all categories, with the exception of the detonation category, its source is then categorized as an explosion.

Technologies to monitor for nuclear explosions have historically leveraged multiple sensor networks like the VELA satellites \citep{Wright2017_1} and seismic arrays \citep{Ringdal1982_1}. However, research to routinely fuse such multi-modal, simultaneous observations and improve event categorization accuracy remains on-going \citep{herzog2017nuclear, kalinowski2023innovation}. Both physical and mathematical issues each challenge implementing ECM with multiple modalities \citet{anderson2007mathematical}. 

Firstly, it is operationally difficult to associate multi-modal signals to the same causative source for all but the largest events. Therefore, analyses of such data can produce limited sets of discriminants. This means that the ECM model, which must use the same set of discriminants that it is calibrated against to categorize a new event, cannot ingest data from a new event that contains only partial observations, relative to that calibration data. 

Secondly, ECM uses a covariance estimator which can lead to model mis-specification, because there is the potential for events that produce multi-modal signals to have a drastically different covariance structures. For example, high-yield, aboveground nuclear detonations will produce optical signals with high irradiance that covaries with large amplitude seismic waveforms \citep{ford2021joint}. This does not imply that a nuisance event that produces a high irradiance signal will also produce seismic waveforms with significant amplitude. Therefore, while a conventional or nuclear detonation may produce covarying discriminants, nuisance events may not. 

Lastly, ECM may categorize an event for non-intuitive reasons. This is particularly true when the number of event categories considered grows. When ECM then fails to reject a new observation from multiple event categories, the model declares the event as indeterminate, requiring human intervention for categorization \citep{anderson2007mathematical}. A monitoring agent must then explore strategies to empirically reduce the observed number of false negatives, while not significantly increasing the number of indeterminate categorizations or the overall categorization accuracy.  Some of the gaps in \citeauthor{anderson2007mathematical}'s \parencite*{anderson2007mathematical} approach are a result of hypothesis testing and parameter estimation procedures associated with classical statistical methods.  Hence, subsequent references to the ECM model provided in \citet{anderson2007mathematical} are termed classical ECM (C-ECM). 

\begin{figure}
\centering
\includegraphics[width=0.95\textwidth]
{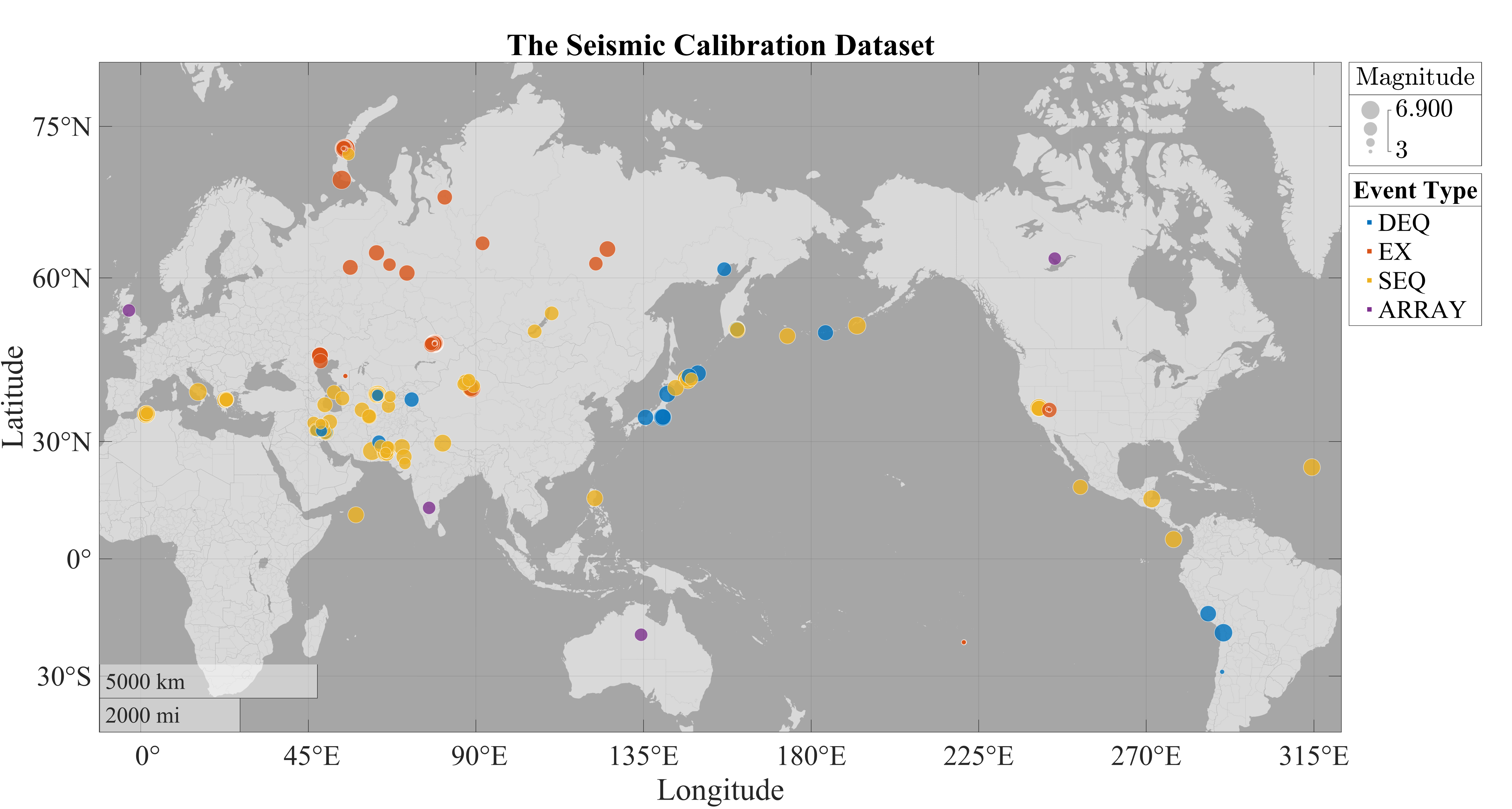}
\caption{The source locations for Deep Earthquake (DEQ), Explosion (EX), and Shallow Earthquake (SEQ) data used to calibrate C-ECM and B-ECM. Many sources were recorded from historical nuclear tests and their size is marked by seismic magnitude. Explosions with entirely missing magnitude data are marked with a ``magnitude 3'' size marker; this provides a crude estimate for global magnitude of completeness in most continental regions. The violet circles mark locations of the the four seismic arrays (labeled ``ARRAY'') that provided the bulk of the discriminant data.}
\label{fig:ecm-map}
\end{figure}

To address some of the problems faced with fusing discriminants from multiple modalities, we develop an ECM model which uses Bayesian methods (B-ECM) for parameter estimation and decision criteria. This novel work uses Bayesian decision theory and treats missing data with a matrix $t$-distribution within a Bayesian Normal mixture model \citep{stephens1997bayesian}. We evaluate integrals analytically when possible to avoid the computational expense of multi-dimensional numerical integration. These collective advances then select an event category for a new observation within practical time constraints and without supervision. We utilize a Bayesian typicality index to detect if a new observation is inconsistent with the event categories used for training, under the uncertainty imparted by using training data with missing elements. Lastly, we demonstrated the accuracy of the B-ECM methodology against a carefully prepared, curated dataset with missing entries (Fig. \ref{fig:ecm-map}), which C-ECM has previously been demonstrated against \citep{Anderson2008_1}. This establishes a baseline and measures gains in accuracy with the same curated dataset.

We organize this report as follows: Section \ref{sec:stat-basis} summarizes the statistical methods assimilated to create the B-ECM methodology. Section \ref{sec:implementation} overviews how these methods are implemented in codes and algorithms. Section \ref{sec:exp} details results from a series of Monte Carlo (MC) experiments that compare the performance of the B-ECM models versus the C-ECM model. Finally, section \ref{sec:discussion} provides a discussion of the results, their implications, and potential avenues for further improvements to event categorization models.

\section{Statistical Basis for the Decision Framework} \label{sec:stat-basis}
B-ECM is an end-to-end formalized framework that exploits geophysical discriminants to categorize an unknown event. We use Bayesian inference (Section \ref{sec:bayes}) throughout this work to ensure consistency between the computational and data processing stages of B-ECM. Our statistical model (Section \ref{sec:bayescat}) leverages Bayesian inference to use training data with missing discriminants. Bayesian decision theory (Section \ref{sec:decision-thy}) enables consistent decision making, taking into account category probabilities, the utility of a correct categorization, and the loss of an incorrect categorization.   Notation is detailed in Nomenclature Appendix \ref{sec:nomencl}.

\subsection{Bayesian Inference}\label{sec:bayes}
Bayesian statistics relies on Bayes' Theorem (Equation \ref{eq:bayesrule}) to infer model parameters. Bayes' rule reads as:  the posterior distribution of \(\phi\) given data \(y\); \(p(\phi|y)\), is equal to the likelihood of the data \(p(y|\phi)\) times the prior distribution on the model parameter \(p(\phi)\), divided by the marginal likelihood of the data \(p(y)\):
\begin{equation} \label{eq:bayesrule}
p(\phi|y) = \frac{p(y|\phi)p(\phi)}{p(y)}
\end{equation}
Bayesian inference treats probability as model parameter uncertainty, resulting in a probability distribution on \(\phi\) instead of a point estimate. Computational methods, like the Gibbs sampler, often use Markov chain Monte Carlo (MCMC) to infer model parameters by sampling model parameters from the posterior distribution $p\left( \phi \vert y\right)$ \citep{gelfand1990illustration, gelfand1990sampling, casella1992explaining, geman1984stochastic}. An observer can consider the probabilities for multiple models, given the data, to make data predictions under the Bayesian framework.  We use these traits of Bayesian inference to derive B-ECM. \citet{hoff2009first} provdies more resources on Bayesian inference  and \citet{robert1999monte} gives more details about MCMC and statistical simulation.

\subsection{Bayesian Categorization with Missing Training Data}\label{sec:bayescat}

We choose to categorize data with a methodology that is similar to that in \citet{stephens1997bayesian}. This method assumes that the event category for each training data event known, but the group for which a new observation belongs to is unknown. We let \(\bm{Y}_{N \times p}\) be a matrix that contains the entirety of the training data, with \(N\) event observations each with \(p\) observed discriminants.  Each row of \(\bm{Y}_{N\times p}\) is an observation of the \(k^{\mathrm{th}}\) event category, where \(k \in \{ 1, \dots, K\}\) has no uncertainty. We split the training data into the event category-specific matrices \(\bm{Y}_{N_1 \times p}, \dots , \bm{Y}_{N_k \times p}, \dots , \bm{Y}_{N_K \times p}\). Here, \(N_k\) is the number of training observations in the \(k^{\mathrm{th}}\) event category and \(N_1 + \dots + N_k + \dots + N_K = N\).  The \(p\) discriminants in each \(\bm{Y}_{N_k \times p}\) are the same and are arranged in the same column order.

\begin{figure}\label{fig:matrix-format}
\centering
\includegraphics[width =\columnwidth]{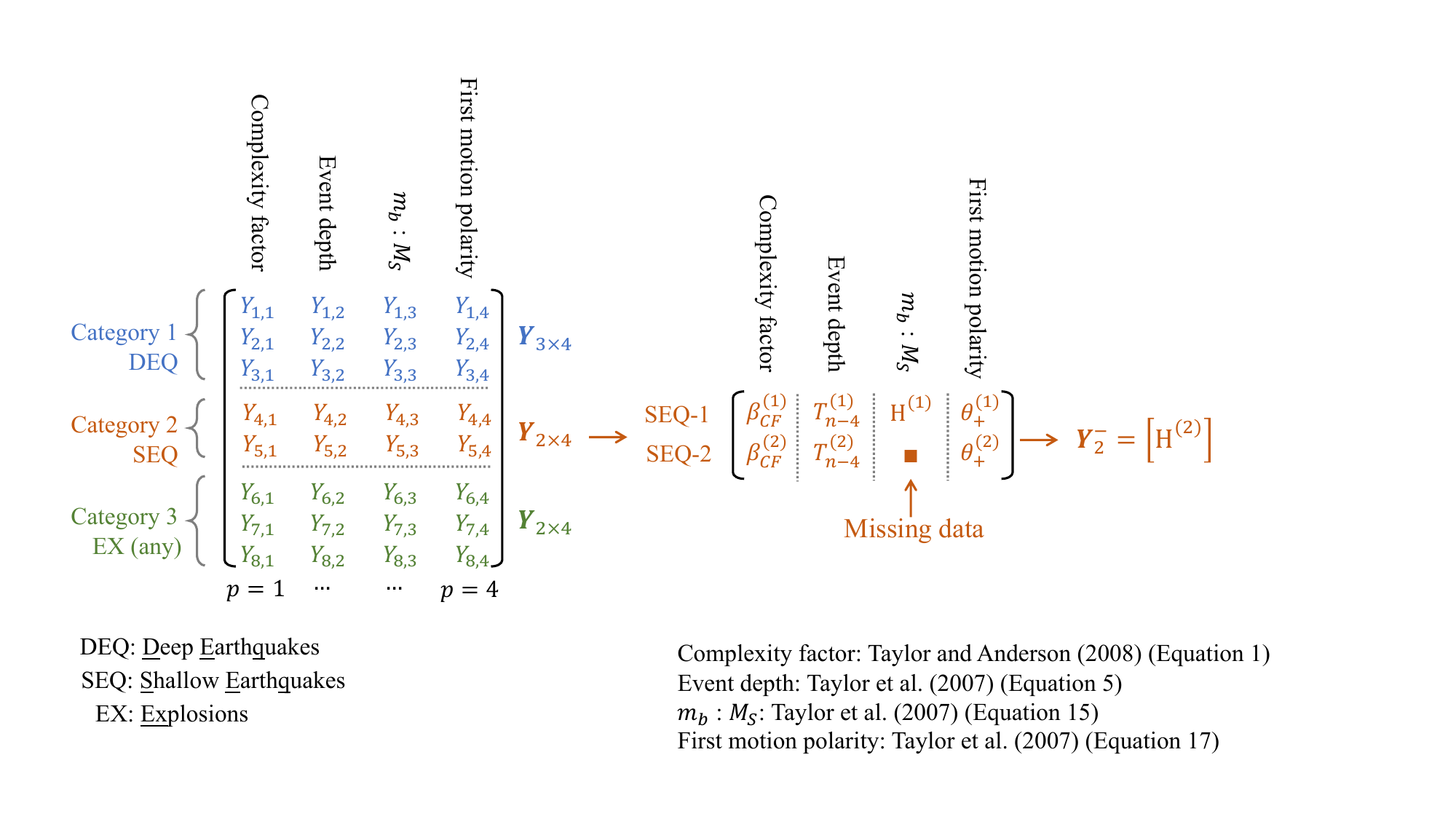}
\caption{A visual description of the training data \(\bm{Y}_{N \times p}\) that populates the B-ECM module. Dimensions \(N=8\) and \(p=4\) in this example. The top, leftmost column matrix organizes data within three categories (deep earthquakes, shallow earthquakes, and explosions) into distinct groups of rows. The four columns enumerate distinct seismic discriminant types (complexity factor, event depth, body wave to surface wave magnitudes, and polarity of first motion). The row matrices \(\bm{Y}_{N_k \times p}\) compartmentalize this training data by category, which we color separately. The sub-matrix \(\bm{Y}_{2 \times 4}\) \((N_2 = 2, p = 4)\) then stores data observed from shallow earthquakes (SEQ), according to the notation cited at the bottom right. The second shallow earthquake observation lacks a body wave to surface wave magnitude measurement. This missing data then populates the data set \(\bm{Y}_{2}^{-}\).}
\end{figure}

We assume that each \(\bm{Y}_{N_k \times p}\) is a realization from a Matrix Normal distribution \citep{gupta2018matrix}, with an unknown mean \(\bm{1}_{N_k} \bm{\mu}_k^{\top}\), independent rows, and column covariance \(\bm{\Sigma}_k\). Appendix \ref{sec:normal-likelihood} provides details. The elements of a Matrix Normal distributed random variable have infinite support (that is, the interval where the density is not identically zero) on the real number line. Previously, p-values on \((0,1]\) have been used with ECM \citep{anderson2007mathematical}.  We choose to transform these values with a logit transformation \(\mathrm{logit}(x) = \mathrm{ln}(x) - \mathrm{ln}(1-x)\) so that the transformed values are (instead) unbounded. The arcsine transform used in \citet{anderson2007mathematical} returns values on \([0,1]\) instead of on the entire real number line. This property is a mis-match for the normal distribution, which has support for all real numbers. However, both transforms are available for use in our code.

We integrate the Matrix Normal density of each \(\bm{Y}_{N_k \times p}\) over the prior densities of \(\bm{\mu}_k\) and \(\bm{\Sigma}_k\) for each \(k \) analytically. This reduces the computational burden of inference, as detailed in Appendix \ref{sec:data-likelihood}. Integration results in the marginal likelihood \(p(\bm{Y}_{N_k \times p}|\bm{\theta}_k)\), which is a matrix t-distribution \citep{gupta2018matrix}, conditioned on a set of prior hyperparameters, abbreviated as \(\bm{\theta}_k\) for the hyperparameters specific to group \(k\).  We refer to the set of prior hyperparameters, totaled over all \(K\) event categories, as \(\bm{\theta}\).  Appendix \ref{sec:priors} details the prior distributions and hyperparameters in \(\bm{\theta}\). Appendix \ref{sec:matrix-t} provides relevant properties of the matrix t-distribution.

Sometimes an event contained in \(\bm{Y}_{N_k \times p}\) will have less than \(p\) recorded discriminants, which is therefore unavailable. The unavailable discriminants for such a case are considered \emph{missing data} \citep{little2019statistical, gelman1995bayesian}. We let \(\bm{Y}_k^{-}\) represent the missing data elements from all \(N_k\) events in \(\bm{Y}_{N_k \times p}\) and \(\bm{Y}_k^{+}\) be the recorded elements, respectively. We leave \(\bm{Y}_k^{+}\) and \(\bm{Y}_k^{-}\) dimensionless to represent the observed and missing data with generality. We use the chain rule of conditional probability to specify 
\begin{equation}
p(\bm{Y}_{N_k \times p}|\bm{\theta}_k) = p(\bm{Y}_k^{-}, \bm{Y}_k^{+}|\bm{\theta}_k) = p(\bm{Y}_k^{-}| \bm{Y}_k^{+}, \bm{\theta}_k)p(\bm{Y}_k^{+}| \bm{\theta}_k).
\end{equation}
Because of the properties of conditional probability and \(p(\bm{Y}_{N_k \times p}|\bm{\theta}_k)\), we can obtain a large number of random draws, or samples, of \(\bm{Y}_k^{-}\) given \(\bm{Y}^{+}_k\) and \(\bm{\theta}_k\) from the probability density function \(p(\bm{Y}_k^{-}| \bm{Y}_k^{+}, \bm{\theta}_k)\) using a combination of conditional formulas (Appendices \ref{sec:matrix-t} and \ref{sec:missing-data-conditionals}) for the matrix t-distribution and numerical methods.  We use these probabilistic imputations of \(\bm{Y}_k^{-}\) to exploit partial observations in the training data set, and thereby compute more accurate decisions by utilizing instead of discarding partial observations.

We now use the imputed \(\bm{Y}_k^{-}\) and return to a \emph{full data set} \(\bm{Y}_{N_k \times p}\) and \(p(\bm{Y}_{N_k \times p}|\bm{\theta}_k)\) to next quantify the uncertainty that a new event belongs to each training category.  We then let \(\tilde{\bm{y}}_{\tilde{p}}\) be such a new event recorded as a vector of discriminants of length \(\tilde{p} \leq p\).  Vector \(\tilde{\bm{y}}_{\tilde{p}}\) is associated with the random vector of length \(K\) that we call \(\tilde{\bm{z}}_K^{\top} = [\tilde{z}_1, \dots, \tilde{z}_k, \dots , \tilde{z}_K]\).  A random realization of \(\tilde{\bm{z}}_K\) is equivalent to a draw from a multinomal distribution for \(n = 1\), where a single element is equal to 1 and the remaining elements are zero. The index, \(k \in \{1, \dots , K\}\), of the non-zero value of \(\tilde{\bm{z}}_K\), corresponds to \(\tilde{\bm{y}}_{\tilde{p}}\) belonging to the \(k^{\mathrm{th}}\) training event category. The conditional expected value of \(\tilde{\bm{z}}_K\), given all the data and prior parameter specifications, is then equivalent to: 
\begin{equation} \label{eq:expectz}
\mathbb{E}[\tilde{\bm{z}}_K|\tilde{\bm{y}}_{\tilde{p}}, \bm{Y}_{N \times p}, \bm{\theta}] = [p(\tilde{z}_1 = 1|\tilde{\bm{y}}_{\tilde{p}}, \bm{Y}_{N \times p}, \bm{\theta}), \dots , p(\tilde{z}_K = 1|\tilde{\bm{y}}_{\tilde{p}}, \bm{Y}_{N \times p}, \bm{\theta})].
\end{equation}
Equation \ref{eq:expectz} is equivalent to a vector that specifies the probability of \(\tilde{\bm{y}}_{\tilde{p}}\) belonging to each of the \(K\) event categories. For the \(k^{\mathrm{th}}\) event category, \(p(\tilde{z}_k = 1|\tilde{\bm{y}}_{\tilde{p}}, \bm{Y}_{N \times p}, \bm{\theta})\) can be evaluated using Bayes' rule.
\begin{equation}\label{eq:zpred}
p(\tilde{z}_k = 1|\tilde{\bm{y}}_{\tilde{p}}, \bm{Y}_{N \times p}, \bm{\theta}) = \frac{p(\tilde{\bm{y}}_{\tilde{p}}|\tilde{z}_k = 1, \bm{Y}_{N_k\times p}, \bm{\theta}_k)p(\tilde{z}_k = 1|\bm{Y}_{N \times p}, \bm{\theta})}{p(\tilde{\bm{y}}_{\tilde{p}}|\bm{Y}_{N\times p}, \bm{\theta})}
\end{equation}
All densities on the right hand side are available in closed form.  Appendix \ref{sec:predictive-y} details \(p(\tilde{\bm{y}}_{\tilde{p}}|\tilde{z}_k = 1, \bm{Y}_{N_k\times p}, \bm{\theta}_k)\), the predictive density for category \(k\). Appendix \ref{sec:predictive-a-priori-y} details \(p(\tilde{z}_k = 1|\bm{Y}_{N \times p}, \bm{\theta})\), the probability that \(\tilde{z}_k = 1\) prior to observing \(\tilde{\bm{y}}_{\tilde{p}}\). Appendix \ref{sec:marginal-predictive} details \(p(\tilde{\bm{y}}_{\tilde{p}}| \bm{Y}_{N \times p}, \bm{\theta})\), the predictive density for $\tilde{\bm{y}}_{\tilde{p}}$ marginalized over all event categories.

\subsection{Bayesian Decision Theory}\label{sec:decision-thy}

We now focus on event categorization, that is, placing \(\tilde{\bm{y}}_{\tilde{p}}\) into one of the \(K\) training event categories, using Bayesian Decision Theory\citep{robert2007bayesian, berger2013statistical}.

We call \(a_k\,,\, k \in \{1, \dots , K\}\) the action of placing \(\tilde{\bm{y}}_{\tilde{p}}\) into the \(k^{\mathrm{th}}\) event category .  Given the data and prior specifications, the choice of each action \(a\) has an associated loss. Loss is unavoidably subjective, and specified by a loss function that can be evaluated for each action. The action with the minimum expected loss is considered the lowest risk in this Bayesian setting.  We choose a simple loss function that specifies a loss matrix of constants \(\bm{C}\)
\begin{equation}
\bm{C}_{K \times K} = 
\begin{blockarray}{ccccl}
 & a_1 & \dots & a_K \\
\begin{block}{c[ccc]l}
\tilde{z}_1 = 1 & C_{1,1} & \dots & C_{1,K}\bigstrut[t] &  \\
\vdots & \vdots & \ddots & \vdots &  \\
\tilde{z}_K = 1 & C_{K,1} & \dots & C_{K,K}\bigstrut[b] &, \\
\end{block}
\end{blockarray}
\end{equation}
where each element of \(\bm{C}\) corresponds to the loss for each action that is indexed by the columns, for a value of the random variable \(\tilde{\bm{z}}_K\) indexing the rows. The elements of \(\bm{C}\) are ideally chosen with some thoughtful use of utility theory \citep{robert2007bayesian, berger2013statistical}. For a draw of \(\bm{\tilde{z}}_K\), the loss function is evaluated as \(\bm{L}(\tilde{\bm{z}}_K, \bm{a}) = \tilde{\bm{z}}_K^{\top} \bm{C}\), producing a vector of the same length as the number of actions. A simple loss function uses \(\bm{C}_{K \times K} = \bm{J}_K - \bm{I}_K\), a matrix of ones minus the identity matrix such that the diagonal of \(\bm{C}_{K \times K}\) is equal to zero.  Under this loss function, and using a draw of \(\tilde{\bm{z}}_K\) with \(\tilde{z}_2 = 1\), the loss of taking action \(a_2\) and placing \(\tilde{\bm{y}}_{\tilde{p}}\) in category two is equal to zero, while the loss of each of the remaining actions is equal to one. Taking the expectation of the loss function, with respect to the posterior distribution of random vector \(\tilde{\bm{z}}_{K}\), is therefore \(\mathbb{E}\left[\bm{L}(\tilde{\bm{z}}_K, \bm{a})\right] = \mathbb{E}\left[\tilde{\bm{z}}_K^{\top}|\tilde{\bm{y}}_{\tilde{p}}, \bm{Y}_{N\times p}, \bm{\theta}\right]\bm{C}\).

This Bayesian decision criterion is adaptable to an array of detonation detection scenarios. For binary decisions, an observer decides only whether \(\tilde{\bm{y}}_{\tilde{p}}\) is a detonation or not. Such cases are common in explosion monitoring. Then \(\bm{C}\) is a \(2 \times 2\) matrix,
\begin{equation}
\bm{C}_{2 \times 2} = 
\begin{blockarray}{cccl}
 & a_1  & a_2 & \\
\begin{block}{c[cc]l}
\tilde{z}_1 = 1 & C_{1,1}  & C_{1,2}\bigstrut[t] & \\
\tilde{z}_1 \neq 1 & C_{2,1} & C_{2,2}\bigstrut[b] &, \\
\end{block}
\end{blockarray}
\end{equation}
where \(\tilde{z}_1\) corresponds to the event of interest.  When \(C_{1,1} = C_{2,2} = 0\) and \(C_{1,2} = C_{2,1} = 1\) the matrix represents the 0-1 loss function in classical hypothesis testing \citep{robert2007bayesian}.  In Section \ref{sec:exp}, we primarily investigate binary categorization to showcase how changing an element of \(\bm{C}_{2\times 2}\) allows one to intuitively target false negatives or false positives. Appendix \ref{sec:binary-decision} details how reducing the action space to binary categorization allows several simplifications.

When training data has missing entries, the posterior expected loss is evaluated as \(\mathbb{E}\left[\tilde{\bm{z}}_K^{\top}|\tilde{\bm{y}}_{\tilde{p}}, \bm{Y}^{+}, \bm{\theta}\right]\bm{C} \). Appendix \ref{sec:expected-loss-appendix} details the practical use of Monte Carlo integration to approximate this marginal expectation. 

A critique of this categorization method is that if \(\tilde{\bm{y}}_{\tilde{p}}\) is not from one of the \(K\) training event categories, and is a true outlier, there will still be an action with the lowest expected loss, and \(\tilde{\bm{y}}_{\tilde{p}}\) will be placed into the wrong category.  We address this issue by utilizing a Bayesian typicality index in conjunction with Bayesian Decision Theory.  Our Bayesian typicality index places a Bayesian twist on the typicality index \citep{mclachlan2005discriminant} used in C-ECM, by utilizing the multivariate t predictive distribution of \(\tilde{\bm{y}}_{\tilde{p}}\) and Bayesian decision theory to decide on the action of rejection in the event that elements of the training data are missing.  In the event that \(\tilde{\bm{y}}_{\tilde{p}}\) is rejected from the event category selected as the minimum expected loss action, via the typicality index, then \(\tilde{\bm{y}}_{\tilde{p}}\) is considered an outlier, possibly belonging to an event category not included in the \(K\) training event categories.  Appendix \ref{sec:bayes-typicality-index} gives more details use of the typicality index in a Bayesian setting.

\section{Implementation}\label{sec:implementation}

An \textsf{R} package titled \texttt{ezECM} implements the model; it is a ``living package'' under active improvement. The \texttt{ezECM} package provides functions for loading data, training a B-ECM model, saving and loading training results, predicting the category of a new observation, decision making using Bayesian decision theory, as well as summarizing and plotting results.  The \texttt{ezECM} package also includes an implementation of C-ECM, which provides a baseline for comparing empirical results between models.

When there are no missing training data, evaluation of each \(p(\tilde{z}_k = 1|\tilde{\bm{y}}_{\tilde{p}}, \bm{Y}_{N \times p}, \bm{\theta})\) is straightforward.  When there are missing data entries, we implement a Gibbs sampler \citep{casella1992explaining} in \texttt{ezECM} to generate samples from each \(p(\bm{Y}_k^{-}|\bm{Y}_k^{+}, \bm{\theta}_k)\).  The Gibbs sampler is the only computationally intense aspect of training our B-ECM model.  Pseudocode for this operation is provided as Algorithm \ref{alg:training}. A user of the algorithm either supplies prior parameters \(\bm{\theta}\), or uses the default values in the package function.   The user also provides the total number of samples \(T\) to take of \(\bm{Y}^{-}\), and the number of burn-in samples \(B\), which are discarded under the assumption that the Markov chain has not converged to the target distribution within the first \(B\) iterations.  At the end of the algorithm \(T - B\) total samples are obtained.   Initial values for the missing entries \((\bm{Y}^{-})^{(1)}\), need to be set at the start of the algorithm.  In \texttt{ezECM} the initialized missing elements \((\bm{Y}_k^{-})^{(1)}\) are taken to be the column mean of the observed elements.  

The missing entries from each column of each event category are drawn, conditional on the remaining entries.  For the \(\ell^{\mathrm{th}} \in \{1, \dots , p\}\) column of \(\bm{Y}_{N_k \times p}\) a column permutation matrix \(\bm{P}^{C}_{k, \ell}\) swaps the columns of \(\bm{Y}_{N_k \times p}\), and a row permutation matrix \(\bm{P}^{R}_{k, \ell}\) swaps the rows of  \(\bm{Y}_{N_k \times p}\) such that \(\underline{\bm{Y}}_{N_k \times p} =  \bm{P}^{R}_{k, \ell} \bm{Y}_{N_k \times p} \bm{P}^{C}_{k, \ell}\).  The missing data in the \(\ell^{\mathrm{th}}\) column of \(\bm{Y}_{N_k \times p}\) is found in the first elements of the first column of \(\underline{\bm{Y}}_{N_k \times p}\), so that the missing data in column \(\ell\) can be drawn from the conditional distributions found in Appendix \ref{sec:missing-data-conditionals}; Appendix \ref{sec:the-model} details the notation found in Algorithm \ref{alg:training}. Our algorithm saves the draw and updates the corresponding elements of \(\bm{Y}_{N_k \times p}\) with these values. We repeat this process for all columns of \(\bm{Y}_{N_k \times p}\) with missing values in the original training data set, and then repeated for \(T\) iterations. The process approximates draws from the joint distribution \(p(\bm{Y}_k^{-}|\bm{Y}^{+}_k, \bm{\theta}_k)\) \citep{robert1999monte} for each \(k\). 
\begin{algorithm}
\caption{Joint Monte Carlo samples of \(\bm{Y}^{-}_k ; \forall k \in \{1 , \dots , K\}\)}\label{alg:training}
\begin{algorithmic}
\Require \(\bm{\eta}, \bm{\Psi}, \bm{\nu}, \bm{Y}^{+}, \bm{P}^{R}_{k, \ell}, \bm{P}^{C}_{k, \ell}\)
\Initialize{\((\bm{Y}^{-})^{(1)}, (N_{k}^m)^{\ell} \times T \ \mathrm{Matrices \ to \ store \ samples} \ (\bm{y}^{-}_{k, \ell})^{(t)}; 
\forall \ k \in \{1, \dots , K\}, \ell \in \{1 , \dots , p\}, t \in \{1, \dots , T\} \)}
\For{\(t \in \{1,  \dots , T\}\)}
    \For{\(k \in \{1 , \dots , K\}\)}
        \For{\(\ell \in \{1  , \dots , p\}\)}
        \State \(\underline{\bm{Y}}_{N_k \times p} \gets \bm{P}^{R}_{k , \ell} \bm{Y}_{N_k \times p} \bm{P}^{C}_{k, \ell}\)
        \State \(\underline{\bm{\eta}}_k \gets (\bm{P}^{C}_{k, \ell})^{\top} \bm{\eta}_k\)
        \State \(\underline{\bm{\Psi}}_k \gets (\bm{P}^{C}_{k, \ell})^{\top} \bm{\Psi}_k \bm{P}^{C}_{k, \ell}\)
        \State \(\bm{y}^{-}_{k,\ell} \sim p(\underline{\bm{y}}^{-}_{N_k^m \times 1}|\underline{\bm{y}}^{+}_{N_k^o \times 1}, \underline{\bm{Y}}_{N_k^m \times (p-1)}, \underline{\bm{Y}}_{N_k^o \times (p-1)}, \bm{\eta}_k, \bm{\Psi}_k, \nu_k)\)
        \State \((\bm{y}_{k,\ell}^{-})^{(t)} \gets \bm{y}_{k,\ell}^{-}\)
        \State \( \underline{\bm{y}}^{-}_{N_k^m \times 1}  \gets \bm{y}_{k,\ell}^{-}\)
        \State \(\bm{Y}_{N_k \times p} \gets  (\bm{P}^{R}_{k , \ell})^{\top}  \underline{\bm{Y}}_{N_k \times p} (\bm{P}^{C}_{k, \ell})^{\top}\)
        \EndFor
    \EndFor
\EndFor
\end{algorithmic}
\end{algorithm}

Once we make draws of \(\bm{Y}^{-}\), we then approximate \(\mathbb{E}[\tilde{\bm{z}}_K|\tilde{\bm{y}}_{\tilde{p}}, \bm{Y}^{+}, \bm{\theta}]\) with a new observation \(\tilde{\bm{y}}_{\tilde{p}}\) with functions in \texttt{ezECM}, and input these decisions to the Bayesian decision theory framework (see Section \ref{sec:decision-thy}). Algorithm \ref{alg:prediction} documents pseudocode for this process. Using the \(t \in \{1 , \dots , T - B \}\) draws \((\bm{Y}^{-})^{(t)}\) Algorithm \ref{alg:training} outputs, we must evaluate the expected predictive category probability \eqref{eq:zpred} for each \(k \in \{1 , \dots , K\}\) by first joining \((\bm{Y}^{-})^{(t)}\) with observations \(\bm{Y}^{+}\) and evaluating the multivariate t-distribution density \(p(\tilde{\bm{y}}_{\tilde{p}}|\tilde{z}_k = 1, \bm{Y}^{+}, (\bm{Y}^{-})^{(t)}, \bm{\theta}_k)\) detailed in Appendix \ref{sec:predictive-y} for all \(k\).  Then the integral over \(\bm{Y}^-\) to find each density \(p(\tilde{\bm{y}}_{\tilde{p}}|\tilde{z}_k = 1, \bm{Y}_k^{+}, \bm{\theta}_k)\) is approximated \citep{ULAMS1949TMCM} as the mean over all \(t\), and used with the result in Appendix \ref{sec:expected-loss-appendix} to evaluate the expected category probabilities for a \(\tilde{\bm{y}}_{\tilde{p}}\).  In the case where \(\tilde{p} < p\) we use the properties of the marginal matrix t-distribution (Appendix \ref{sec:matrix-t}) to evaluate \(p(\tilde{\bm{y}}_{\tilde{p}}|\tilde{z}_k = 1, \bm{Y}^{+}_k, (\bm{Y}^{-}_{k})^{(t)}, \bm{\theta}_k)\).
\begin{algorithm}
\caption{\(\mathbb{E}[\tilde{\bm{z}}_K|\tilde{\bm{y}}_{\tilde{p}}, \bm{Y}^{+}, \bm{\theta}]\)}\label{alg:prediction}
\begin{algorithmic}
\Require \(\tilde{\bm{y}}_{\tilde{p}}, \bm{\theta} , \bm{Y}^{+}, (\bm{Y}^{-})^{(B, \dots , T)}\)
\Initialize{Data Matrix \(\bm{p}_{K \times (T-B)}\) to store samples \(p_k^{(t)} = p(\tilde{\bm{y}}_{\tilde{p}}|\tilde{z}_k = 1, (\bm{Y}_{N_k\times p})^{(t)}, \bm{\theta}_k)\)}
\For{\(k \in \{1 , \dots , K\}\)}
    \For{\(t \in \{B , \dots , T\}\)}
        \State \((\bm{Y}_{N_k \times p})^{(t)} \gets \bm{Y}_k^+ \cup (\bm{Y}_k^{-})^{(t)}\)
        \State \(p_k^{(t)} \gets p(\tilde{\bm{y}}_{\tilde{p}}|\tilde{z}_k = 1, (\bm{Y}_{N_k\times p})^{(t)}, \bm{\theta}_k)\)
    \EndFor
    \State \(p(\tilde{\bm{y}}_{\tilde{p}}|\tilde{z}_{k} = 1, \bm{Y}_k^{+}, \bm{\theta}_k) \gets \frac{1}{T-B} \sum_{t = B}^{T} p_k^{(t)}\)
\EndFor
\State \(p(\tilde{\bm{y}}_{\tilde{p}} | \bm{Y}^{+}, \bm{\theta}) \gets \sum_{k = 1}^{K} p(\tilde{\bm{y}}_{\tilde{p}}|\tilde{z}_{k} = 1, \bm{Y}_k^{+}, \bm{\theta}_k)p(\tilde{z}_k = 1|\bm{Y}_{N \times p},\bm{\theta})\)
\For{\(k \in \{1 , \dots , K\}\)}
\State \(\mathbb{E}[\tilde{z}_{k} = 1| \tilde{\bm{y}}_{\tilde{p}}, \bm{Y}^{+}, \bm{\theta}] = \frac{p(\tilde{\bm{y}}_{\tilde{p}}|\tilde{z}_{k} = 1, \bm{Y}_k^{+}, \bm{\theta}_k)p(\tilde{z}_k = 1|\bm{Y}_{N \times p},\bm{\theta})}{p(\tilde{\bm{y}}_{\tilde{p}} | \bm{Y}^{+}, \bm{\theta})}\)
\EndFor
\end{algorithmic}
\end{algorithm} 

The user can choose to trade a reduction in autocorrelation between the Monte Carlo samples of \(\bm{Y}^{-}\) for an increase in computation time using thinning in the predict function of \texttt{ezECM}. Thinning the samples by integer factor \(q\) then utilizes only every \(q^{\mathrm{th}}\) sample of \(\bm{Y}^{-}\) to execute Algorithm \ref{alg:prediction}.  The size of the integer set \(\{B , \dots , T\}\) used as values for the index \(t\) and divisor for computing the mean of \(p(\tilde{\bm{y}}_{\tilde{p}}|\tilde{z}_k = 1, \bm{Y}^{+}_k, (\bm{Y}^{-}_{k})^{(t)}, \bm{\theta}_k)\) in Algorithm \ref{alg:prediction} are adjusted accordingly.

Lastly, we provide a function in \texttt{ezECM} to evaluate the loss function and to find the minimum loss action, given the loss matrix \(\bm{C}_{2 \times 2}\) or \(\bm{C}_{K \times K}\). We also specify a category of importance. If the minimum loss action is to categorize \(\tilde{\bm{y}}_{\tilde{p}}\) as the specified category, then we calculate the typicality index of that category for a significance level \(\tilde{\alpha}\). If the algorithm deems \(\tilde{\bm{y}}_{\tilde{p}}\) as atypical of the category, then we consider \(\tilde{\bm{y}}_{\tilde{p}}\) to be an outlier, pending further analysis.

\section{Experiments}\label{sec:exp}
We now perform a series of two Monte Carlo experiments to quantify any advantages of B-ECM over C-ECM, the first using synthetic statistically generated data and the second using real ground based data. In each experiment we use a set of training data to fit the models, and use a set of testing data with a known truth to measure accuracy, false negatives, and false positives. We fit the five models that include 1) C-ECM, which can only utilize complete data records, 2) B-ECM with only complete data records, 3) B-ECM with all data records (M-B-ECM), 4) M-B-ECM with a loss function chosen to reduce false negatives (M-B-ECM \(C_{1,2} = 2\)), and 5) M-B-ECM for event categorization (M-B-ECM Cat).  Decision criteria in models 1, 2, 3, and 4 are set up for binary categorization. We used 0-1 loss for B-ECM variants, except model 4. In this case, we utilized the loss matrix that Equation \ref{eq:loss-false-pos} shows.
\begin{equation}\label{eq:loss-false-pos}
\bm{C}_{2 \times 2} = 
\begin{blockarray}{cccl}
 & a_1  & a_2 & \\
\begin{block}{c[cc]l}
\tilde{z}_1 = 1 & 0  & 2\bigstrut[t] & \\
\tilde{z}_1 \neq 1 & 1 & 0\bigstrut[b] & \\
\end{block}
\end{blockarray}
\end{equation}
We chose a level of significance for all typicality indices to be \(\tilde{\alpha} = 0.05\), for both C-ECM and B-ECM.  Decisions using any B-ECM model first require evaluating the posterior expected loss for each action considered.  For binary decisions, the minimum expected loss action is used to place \(\tilde{\bm{y}}_{\tilde{p}}\) in a presumptive category.  If the presumptive category is a detonation, the Bayesian typicality index (Appendix \ref{sec:bayes-typicality-index}) is used to check if \(\tilde{\bm{y}}_{\tilde{p}}\) is rejected as an extreme value for detonations.  Then, if \(\tilde{\bm{y}}_{\tilde{p}}\) is rejected from the detonation distribution, \(\tilde{\bm{y}}_{\tilde{p}}\) is categorized as not a detonation. Conversely, if there is failure to reject from the detonation distribution, \(\tilde{\bm{y}}_{\tilde{p}}\) is categorized as a detonation.  If the presumptive category is not a detonation in the first place, no typicality index is used and the presumptive category is the final categorization.

When using a B-ECM model for full \(K\) categorization, we calculate the typicality index for the minimum expected posterior loss action, both detonations and non-detonations.  In the case of rejection, \(\tilde{\bm{y}}_{\tilde{p}}\) is categorized as an outlier. No outlier categories were included in the data for our experiments. This means that categorization of \(\tilde{\bm{y}}_{\tilde{p}}\) as an outlier has a detrimental effect on accuracy, as well as false negatives when the true category for \(\tilde{\bm{y}}_{\tilde{p}}\) is detonations.

The calculation of accuracy takes on a different meaning between the binary categorization and the full \(k \in \{1, \dots , K\}\) categorization.  Failure to place \(\tilde{\bm{y}}_{\tilde{p}}\) in exactly the correct category is an inaccuracy for full \(K\) categorization.  In binary categorization, the event is either a detonation or ``something else''.  There is no penalization in accuracy for categorizing \(\tilde{\bm{y}}_{\tilde{p}}\) as a deep earthquake if in reality \(\tilde{\bm{y}}_{\tilde{p}}\) is a shallow earthquake.  Both are grouped together as a new ``something else'' category, and correctly guessing \(\tilde{\bm{y}}_{\tilde{p}}\) is something other than a detonation meets the requirements for a binary detection model.

We implemented B-ECM for both sets of experiments using the code provided in the \textsf{R} package \texttt{ezECM} with \(p(\tilde{\bm{z}}_K|\bm{Y}_{N \times p}, \bm{\theta})\) informed by the data, instead of being equally weighted over \(K\).  The logit function, \(\mathrm{logit}(p) = \ln(p) - \ln(1-p) \), was used to map p-values to \((-\infty, \infty)\).  The defalut prior parameters, selected to allow for wide ranging data observations, in the \texttt{ezECM} package were used.  Details on the prior parameters can be found in Appendix \ref{sec:priors}.

The B-ECM models that utilized data with missing entries, in which Monte-Carlo was required for inference, generated 50,500 draws of each \(\bm{Y}^{-}_k\) and discarded the first 500 draws as burn-in. These same implementations of B-ECM ``thinned'' the draws when we made predictions, and only utilized every \(5^{\mathrm{th}}\) draw, or 10,000 draws in total.

Our implementation of C-ECM in the experiments is identical to our implementation of C-ECM in the \texttt{ezECM} function. In particular, C-ECM fits the RDA model using the \texttt{klaR::rda()} function from the \texttt{klaR} package \citep{klaR2005} in \textsf{R}. We only used C-ECM to form binary decisions. We calculated typicality indices from the Mahalanobis distance by leveraging the methodology in \citet{anderson2007mathematical}. We structured the decision framework such that indeterminate and undefined categorizations could not occur, which the authors felt would unfairly detriment performance metrics for C-ECM if these were possible results from analysis. When there is a failure to reject \(\tilde{\bm{y}}_{\tilde{p}}\) from multiple event categories, the categorization is indeterminant.  If \(\tilde{\bm{y}}_{\tilde{p}}\) is rejected from all event categories, the categorization is undefined. First, the typicality indices were calculated for all event categories.  If \(\tilde{\bm{y}}_{\tilde{p}}\) was rejected from the detonation category, then \(\tilde{\bm{y}}_{\tilde{p}}\) was declared to be not a detonation. If \(\tilde{\bm{y}}_{\tilde{p}}\) was not rejected from the detonation category, the observation was rejected from all other categories to have been declared a detonation.  Alternatively, if \(\tilde{\bm{y}}_{\tilde{p}}\) was not rejected from the detonation category and one or more additional categories, the observation was declared to be not a detonation.

\subsection{Synthetic Data}\label{sec:synthetic-exp}

Our first experiment used synthetically generated data in 250, independent MC experiments for each of \(p \in \{4, 6, 8, 10\}\).  The data generating mechanism was designed to mimic what we would expect to see when fusing ground and discriminants recorded in space, such as event depth and the number of satellites reporting an event. Equation (\ref{eq:cov-question}) shows that we expect that for a \(\bm{y}_{p}\) from a given event category, that some discriminants will only have correlations within the space (S) and ground (G) modalities, while other event categories will have full correlation across discriminants.  Additionally, we expect the number of discriminants used in practice to be of moderate size, but not extremely large (fewer than 10). The values used for \(p\) reflect this choice
\begin{equation}\label{eq:cov-question}
\mathbb{C}\mathrm{ov}(\bm{y}_{p \times 1}) \overset{?}{=}  \left[ \begin{array}{cc}
\bm{\Sigma}_{\mathrm{S}} & \bm{0} \\
\bm{0} & \bm{\Sigma}_{\mathrm{G}}
\end{array} \right] \overset{?}{=}  \left[ \begin{array}{cc}
\bm{\Sigma}_{\mathrm{S}} & \bm{\Sigma}_{\mathrm{S, G}}\\
\bm{\Sigma}_{\mathrm{G, S}} & \bm{\Sigma}_{\mathrm{G}}
\end{array} \right].
\end{equation}
We used \(K = 3\) event categories to generate data. Each category had a unique randomly generated mean and covariance for each experiment. The number of observations in the sum of training and testing data was randomly generated from a multinomial distribution with \((N^{\mathrm{train}} + N^{\mathrm{train}^{-}} + N^{\mathrm{test}})\) trials and equal event probabilities (1/3 for each category). We randomly chose a subset of the event categories to have a block covariance structure with a random block size. We generated data from the draws of the multivariate normal distribution, using the set of event category specific randomly generated parameters, and then transformed onto \([0,1]^p\) using the logistic function so that the data mimics the p-values used in application.  The training data set contained full observations of 25 events. An additional 125 events, for which 50\% of the \(125 \times p\) elements were unobserved, were included in the training data set. Our testing data set included 100 events with 50\% of the \(p \times 100\) entries missing. Algorithm \ref{alg:syn-data} of Appendix \ref{sec:syn-data-appendix} details the data generating mechanism.

Fig. \ref{fig:syn-boxplot} shows the distribution of the observed accuracy for each of the 250 MC iterations  as a box-plot. Here, the binary categorization B-ECM models often performs better than C-ECM, even without taking advantage of partial observations for training. This difference is more pronounced as \(p\) increases. We hypothesize that C-ECM returned little change in median accuracy for increasing \(p\) because the decision criteria is relatively stringent, often leading to indeterminant or undefined categorizations.  A C-ECM model applied on a similar data set would require much additional human intervention.  Additionally, as \(p\) increases there is a growing gap in the variability of the observed accuracy between all B-ECM models and C-ECM.  From this observation, we infer that B-ECM can better leverage larger numbers of discriminants to deliver more consistent results than C-ECM.  

M-B-ECM Cat tackles a more difficult problem, where the model is only considered accurate if the exact category is estimated. For \(p = 4\), median accuracy of M-B-ECM Cat is lower than that of C-ECM. However, relative accuracy improves for increasing \(p\), and M-B-ECM Cat clearly performs better than C-ECM for \(p = 8\) and 10.  This result shows that the utility of using partial observations is quite high because M-B-ECM Cat is tackling a more difficult problem than C-ECM.

Table \ref{tab:syn-data} provides results on the accuracy rate, false negative rate, and false positive rate calculated over the entirety of the experiment.  Values shown in parentheses indicate the benefit or reduction in benefit from using a B-ECM model over C-ECM in this specific problem, with red characters indicating a strongly undesirable change, orange characters indicating a slightly undesirable change, yellow characters indicating little change, green characters indicating a slight improvement, and blue characters indicating a strong improvement.  C-ECM has a relatively high false negative rate and a low false positive rate. B-ECM trained on the same data provides slight improvements in accuracy, with the magnitude of improvement increasing for increasing values of \(p\). B-ECM significantly improves upon the false negative rate of C-ECM, and has a slightly worse false positive rate.  

M-B-ECM has much larger improvements in accuracy, especially when \(p = 10\).  For \(p = 10\) the false positive rate of M-B-ECM is similar to that of C-ECM and the false negative rate is much improved over C-ECM.  Table \ref{tab:syn-data} shows that using the same M-B-ECM fit, but changing the loss function to target a reduction in false negatives, M-B-ECM \(C_{1,2} = 2\) trades a small reduction in accuracy and increase in the false positive rate for a further decrease in the false negative rate over M-B-ECM.  For \(p = 4\), M-B-ECM \(C_{1,2} = 2\) has a significantly lower false negative rate.  For smaller values of \(p\) and constant \(N\), the problem is more difficult for all models.  Changing the loss function to reduce false negatives allows us to be more cautious given the increased uncertainty, with little penalty to overall accuracy.

For full categorisation using M-B-ECM Cat, the threshold for the value of \(p(\tilde{z}_k = 1|\tilde{\bm{y}}_{\tilde{p}}, \bm{Y}_{N\times p}, \bm{\theta})\) to categorize \(\tilde{\bm{y}}_{\tilde{p}}\) in the category of interest indexed as \(k\) is lower than that of binary categorization for M-B-ECM.  Naturally, this reduction in the threshold reduces the false negative rate while raising the false positive rate when compared to M-B-ECM using 0-1 loss.  For \(p = 10\), values for all performance metrics are similar for both M-B-ECM and M-B-ECM Cat.  The relative performance improvements of these models over C-ECM illustrates the flexibility of B-ECM for adapting to data fusion applications with larger number of discrimianants and covariance matrices with no inter-category dependence.

\begin{table}[!h]
\caption{\label{tab:syn-data}Empirical results from synthetic data experiments.  Values in parenthesis are the difference observed from C-ECM, and are colored according to the magnitude of the difference as well as the benefit observed by utilizing the specific B-ECM implementation over C-ECM.}
\centering
\begin{tabular}[t]{lrrrr}
\toprule
  & $p$ = 4 & $p$ = 6 & $p$ = 8 & $p$ = 10\\
\midrule
\addlinespace[0.3em]
\multicolumn{5}{l}{\textbf{C-ECM}}\\
\hspace{1em}\cellcolor{gray!10}{Accuracy} & \cellcolor{gray!10}{0.73} & \cellcolor{gray!10}{0.75} & \cellcolor{gray!10}{0.76} & \cellcolor{gray!10}{0.75}\\
\hspace{1em}False Negative & 0.74 & 0.67 & 0.64 & 0.69\\
\hspace{1em}\cellcolor{gray!10}{False Positive} & \cellcolor{gray!10}{0.04} & \cellcolor{gray!10}{0.04} & \cellcolor{gray!10}{0.04} & \cellcolor{gray!10}{0.03}\\
\addlinespace[0.3em]
\multicolumn{5}{l}{\textbf{B-ECM}}\\
\hspace{1em}Accuracy & 0.77 (\textcolor[HTML]{EACB2B}{$\Delta$0.04}) & 0.82 (\textcolor[HTML]{8BBD94}{$\Delta$0.07}) & 0.86 (\textcolor[HTML]{8BBD94}{$\Delta$0.1}) & 0.9 (\textcolor[HTML]{3B99B1}{$\Delta$0.15})\\
\hspace{1em}\cellcolor{gray!10}{False Negative} & \cellcolor{gray!10}{0.46 (\textcolor[HTML]{6CB799}{$\Delta$-0.28})} & \cellcolor{gray!10}{0.35 (\textcolor[HTML]{6CB799}{$\Delta$-0.32})} & \cellcolor{gray!10}{0.25 (\textcolor[HTML]{6CB799}{$\Delta$-0.39})} & \cellcolor{gray!10}{0.18 (\textcolor[HTML]{3B99B1}{$\Delta$-0.51})}\\
\hspace{1em}False Positive & 0.11 (\textcolor[HTML]{E79812}{$\Delta$0.08}) & 0.1 (\textcolor[HTML]{E79812}{$\Delta$0.06}) & 0.09 (\textcolor[HTML]{EACB2B}{$\Delta$0.04}) & 0.06 (\textcolor[HTML]{EACB2B}{$\Delta$0.03})\\
\addlinespace[0.3em]
\multicolumn{5}{l}{\textbf{M-B-ECM}}\\
\hspace{1em}\cellcolor{gray!10}{Accuracy} & \cellcolor{gray!10}{0.79 (\textcolor[HTML]{8BBD94}{$\Delta$0.06})} & \cellcolor{gray!10}{0.85 (\textcolor[HTML]{8BBD94}{$\Delta$0.1})} & \cellcolor{gray!10}{0.89 (\textcolor[HTML]{3B99B1}{$\Delta$0.13})} & \cellcolor{gray!10}{0.92 (\textcolor[HTML]{3B99B1}{$\Delta$0.17})}\\
\hspace{1em}False Negative & 0.45 (\textcolor[HTML]{6CB799}{$\Delta$-0.29}) & 0.31 (\textcolor[HTML]{6CB799}{$\Delta$-0.36}) & 0.21 (\textcolor[HTML]{6CB799}{$\Delta$-0.43}) & 0.15 (\textcolor[HTML]{3B99B1}{$\Delta$-0.53})\\
\hspace{1em}\cellcolor{gray!10}{False Positive} & \cellcolor{gray!10}{0.08 (\textcolor[HTML]{EACB2B}{$\Delta$0.05})} & \cellcolor{gray!10}{0.08 (\textcolor[HTML]{EACB2B}{$\Delta$0.03})} & \cellcolor{gray!10}{0.06 (\textcolor[HTML]{EACB2B}{$\Delta$0.02})} & \cellcolor{gray!10}{0.04 (\textcolor[HTML]{EACB2B}{$\Delta$0.01})}\\
\addlinespace[0.3em]
\multicolumn{5}{l}{\textbf{M-B-ECM \(C_{1,2} = 2\)}}\\
\hspace{1em}Accuracy & 0.76 (\textcolor[HTML]{EACB2B}{$\Delta$0.03}) & 0.83 (\textcolor[HTML]{8BBD94}{$\Delta$0.08}) & 0.87 (\textcolor[HTML]{8BBD94}{$\Delta$0.12}) & 0.92 (\textcolor[HTML]{3B99B1}{$\Delta$0.17})\\
\hspace{1em}\cellcolor{gray!10}{False Negative} & \cellcolor{gray!10}{0.23 (\textcolor[HTML]{3B99B1}{$\Delta$-0.51})} & \cellcolor{gray!10}{0.17 (\textcolor[HTML]{3B99B1}{$\Delta$-0.5})} & \cellcolor{gray!10}{0.14 (\textcolor[HTML]{3B99B1}{$\Delta$-0.5})} & \cellcolor{gray!10}{0.11 (\textcolor[HTML]{3B99B1}{$\Delta$-0.58})}\\
\hspace{1em}False Positive & 0.24 (\textcolor[HTML]{F5191C}{$\Delta$0.21}) & 0.17 (\textcolor[HTML]{E79812}{$\Delta$0.13}) & 0.12 (\textcolor[HTML]{E79812}{$\Delta$0.08}) & 0.07 (\textcolor[HTML]{EACB2B}{$\Delta$0.04})\\
\addlinespace[0.3em]
\multicolumn{5}{l}{\textbf{M-B-ECM Cat}}\\
\hspace{1em}\cellcolor{gray!10}{Accuracy} & \cellcolor{gray!10}{0.67 (\textcolor[HTML]{E78500}{$\Delta$-0.06})} & \cellcolor{gray!10}{0.76 (\textcolor[HTML]{EACB2B}{$\Delta$0.01})} & \cellcolor{gray!10}{0.83 (\textcolor[HTML]{8BBD94}{$\Delta$0.07})} & \cellcolor{gray!10}{0.89 (\textcolor[HTML]{3B99B1}{$\Delta$0.14})}\\
\hspace{1em}False Negative & 0.35 (\textcolor[HTML]{6CB799}{$\Delta$-0.39}) & 0.26 (\textcolor[HTML]{6CB799}{$\Delta$-0.41}) & 0.19 (\textcolor[HTML]{3B99B1}{$\Delta$-0.45}) & 0.14 (\textcolor[HTML]{3B99B1}{$\Delta$-0.54})\\
\hspace{1em}\cellcolor{gray!10}{False Positive} & \cellcolor{gray!10}{0.15 (\textcolor[HTML]{E79812}{$\Delta$0.12})} & \cellcolor{gray!10}{0.11 (\textcolor[HTML]{E79812}{$\Delta$0.07})} & \cellcolor{gray!10}{0.08 (\textcolor[HTML]{EACB2B}{$\Delta$0.04})} & \cellcolor{gray!10}{0.05 (\textcolor[HTML]{EACB2B}{$\Delta$0.02})}\\
\bottomrule
\end{tabular}
\end{table}

\begin{figure}
\centering
\includegraphics{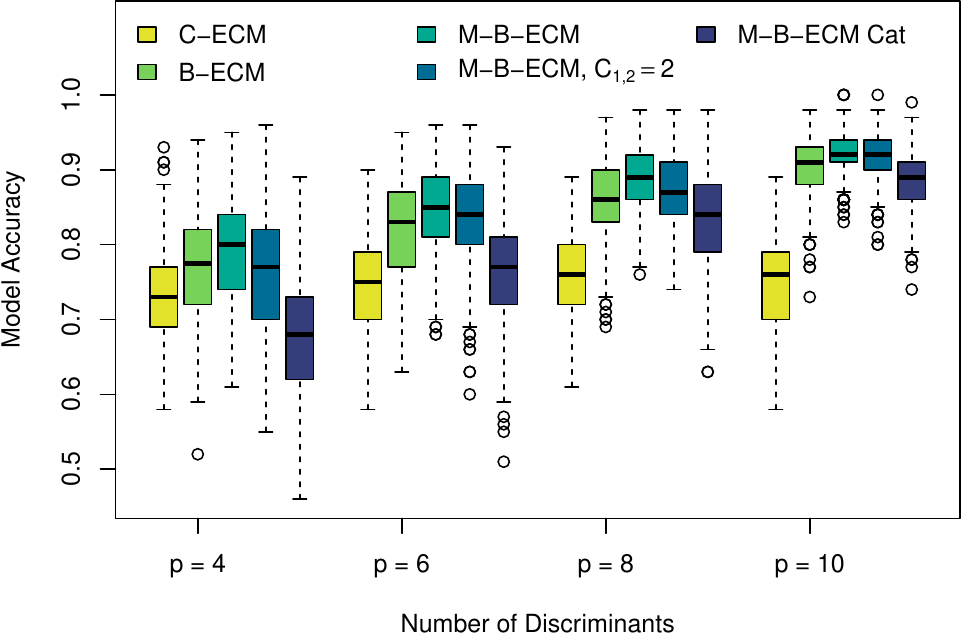}
\caption{Box plots showing the distribution of observed accuracy over the MC iterations for the synthetic data generating mechanism.}
\label{fig:syn-boxplot}
\end{figure}

\subsection{Seismic Discriminant Data}\label{sec:dale-exp}

We now implement our competing versions of ECM against real observations that researchers collected from ground-deployed sensors (seismometers) \citep{Anderson2008_1}. Observers computed the discriminants in this dataset entirely from seismic waveforms that were sourced by real events that include deep earthquakes, shallow earthquakes, and various explosions \ref{fig:ecm-map}. The size of these events (seismic magnitude or explosion yield) largely determined whether a given distribution of seismometers could record signals over the ambient noise environment well enough to estimate source type discriminants. 

The dataset omits discriminants in numerous cases that signals are absent from a sufficient number of seismometer observations; therefore, the dataset includes missing entries. One of the most common discriminants that is present in this dataset is the so-called signal complexity factor, $\beta_{CF}$ (see Fig. \ref{fig:matrix-format}). This factor measures the log-ratio $\log\left( \frac{E_c}{E_s}\right)$ between seismic waveform coda energy $E_c$ and seismic signal energy $E_s$. Here, the coda wave energy $E_c$ is measured over a window that begins 5 seconds after the first compressional wave arrival and that ends 25 seconds after its arrival. The signal energy $E_s$ is measured in a 5 second window that begins immediately after the compressional wave arrival. Researchers within the United Kingdom (UK) Atomic Weapons Establishment (AWE) formed these measurements from seismic array beams in the UK (station code EKA), India (station code GBA), Australia (station code WRA) and Canada (station code YKA) that they filtered over a passband of 0.25 to 4 Hz. Observations demonstrate that waveforms sourced by both nuclear explosions and deep earthquakes show relative simple waveforms and less scattered energy; these produce negative value of $\beta_{CF}$. The other discriminants present in our AWE dataset are less populous but more conventional; they include earthquake depth estimates and body wave versus surface wave magnitudes. We refer readers to \citet{Anderson2008_1} for a summary of their mathematical forms.

The categorization problem that we consider thereby includes only three event type categories: explosions, deep earthquakes, and shallow earthquakes. The dataset groups nuclear and conventional explosion events together, since seismic data cannot generally discriminate between explosion source type (although body wave magnitudes from conventional explosions are usually less than those of nuclear explosions) . We then reduce the total number of available discriminants to a subset that includes a sufficient number (five) to train the C-ECM model. The resulting data set thereby contains five discriminants computed from 280 observations composed of 155 explosions, 26 deep earthquakes, and 99 shallow earthquakes. Our reduction in discriminants still retains a dataset that has 54\% of the \(280 \times 5 = 1{,}400\) of its elements missing.  A small number of observations (25) contain data for all five discriminants; 12 explosions, two deep earthquakes, and 11 shallow earthquakes. We were unable to collect a combination of \(p > 5\) that increased the number of full observations. We therefore used \(p = 5\) within the MC experiment.

\begin{figure}
\centering
\includegraphics{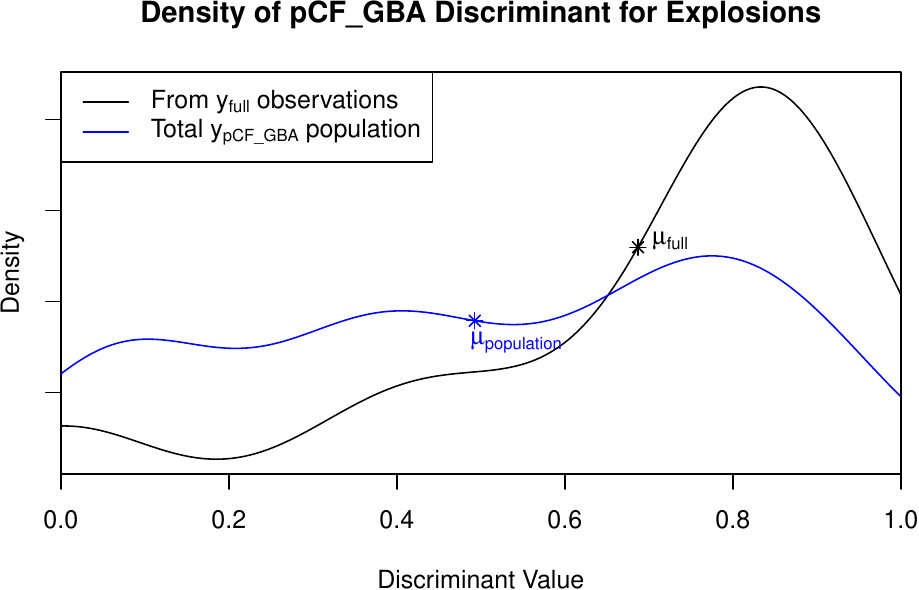}
\caption{Density fits of subsets of the pCF\_GBA discriminant recorded from explosions.  The density fits are from the events for which the full \(p = 5\) discriminants were recorded (shown in black) and the total population of the pCF\_GBA data discriminant for explosions (shown in blue).  The mean values, noted as \(\mu\), were taken after the logit transform and transformed back to \([0,1]\) for plotting.}
\label{fig:differing-densities}
\end{figure}

Missing entries within the dataset do not appear to be missing at random. Fig. \ref{fig:differing-densities} illustrates this property with p-values computed from the signal complexity factors computed from explosions at the GBA array (written as pCF\_GBA). Density fits of this discriminant from the total population and the subset of the data that is associated with full \(p = 5\) observations show that, for this case, the mean of the data taken from full \(p = 5\) observations is shifted significantly from the population mean. This implies that the distributions of values obtained from events with different degrees of data ``missingness'' may not be the same.

The experiment consisted of 250 MC iterations. Subsets of the data were sampled without replacement for training and testing, for each iteration. Because there were only two deep earthquake observations with full observations, and because the functions used for C-ECM require at least two full observations for each category, both of these observations were included in every training data set.  The 18 additional full observations were randomly sampled from the data set to include in the training data.  Of the remaining 255 partial observations, we used \(\left \lfloor{255 \times \sqrt{p}/(1 + \sqrt{p})}\right \rfloor = 176 \) \citep{joseph2022optimal} for training. We used the remaining 84 observations, 79 partial observations and 5 full observations for testing.

The distribution of the observed model accuracy over each MC iteration is shown as box plots in Fig. \ref{fig:dale-boxplot}.  There is relatively less variability in the results from this real data set compared to Fig. \ref{fig:syn-boxplot}.  Median accuracy of M-B-ECM, M-B-ECM \(C_{1,2} = 2\), and M-B-ECM Cat is clearly better than that of C-ECM and B-ECM.  If discriminants are not missing at random, the methods which utilize partial observations for training can get a better representation of the mean of the population, useful for predicting the category of \(\tilde{\bm{y}}_{\tilde{p}}\) for testing data where \(\tilde{p} \in \{1, \dots , 5\}\).  

The results of the total accuracy, false negative rate, and false positive rate calculated over the entirety of data collected are shown in Table \ref{tab:dale-data}.  Results largely echo those in Table \ref{tab:syn-data}.  B-ECM has slightly higher overall accuracy than C-ECM for this moderate \(p = 5\) problem.  C-ECM has a much higher false negative rate than all B-ECM comparators and a lower false positive rate.  

Including missing data to train a M-B-ECM model generally resulted in improved accuracy, a greatly reduced false negative rate, and a worse false positive rate.  For many of the explosion discriminants, the population has a higher variance than the portion which is part of a full \(\tilde{p} = 5\) event.  We hypothesize that the increase in variance for these discriminants within the training data translated to more true explosion \(\tilde{\bm{y}}_{\tilde{p}}\) being captured as being more probable.  The deep earthquake and shallow earthquake discriminant populations often have a lower variance than their explosion counterparts.  Conversely, this increase in variance could also lead to more \(\tilde{\bm{y}}_{\tilde{p}}\) which are not truly explosions to be categorized as such.

M-B-ECM \(C_{1,2} = 2\) resulted in a further reduction in the false negative rate, with little trade off in overall accuracy.  Similar to what can be noted from Table \ref{tab:syn-data}, the lower threshold for categorization as an explosion results in a lower false negative rate and higher false positive rate than binary categorization with M-B-ECM and 0-1 loss.

\begin{table}
\caption{\label{tab:dale-data}Empirical results from seismic discriminant experiments.  Values in parenthesis are the difference observed from C-ECM, and are colored according to the magnitude of the difference as well as the benefit observed by utilizing the specific B-ECM implementation over C-ECM.}
\centering
\begin{tabular}[t]{lr}
\toprule
  & $p$ = 5\\
\midrule
\addlinespace[0.3em]
\multicolumn{2}{l}{\textbf{C-ECM}}\\
\hspace{1em}\cellcolor{gray!6}{Accuracy} & \cellcolor{gray!6}{0.71}\\
\hspace{1em}False Negative & 0.51\\
\hspace{1em}\cellcolor{gray!6}{False Positive} & \cellcolor{gray!6}{0.01}\\
\addlinespace[0.3em]
\multicolumn{2}{l}{\textbf{B-ECM}}\\
\hspace{1em}Accuracy & 0.75 (\textcolor[HTML]{8BBD94}{$\Delta$0.04})\\
\hspace{1em}\cellcolor{gray!6}{False Negative} & \cellcolor{gray!6}{0.41 (\textcolor[HTML]{EACB2B}{$\Delta$-0.09})}\\
\hspace{1em}False Positive & 0.03 (\textcolor[HTML]{EACB2B}{$\Delta$0.02})\\
\addlinespace[0.3em]
\multicolumn{2}{l}{\textbf{M-B-ECM}}\\
\hspace{1em}\cellcolor{gray!6}{Accuracy} & \cellcolor{gray!6}{0.85 \vphantom{1} (\textcolor[HTML]{3B99B1}{$\Delta$0.14})}\\
\hspace{1em}False Negative & 0.12 (\textcolor[HTML]{3B99B1}{$\Delta$-0.39})\\
\hspace{1em}\cellcolor{gray!6}{False Positive} & \cellcolor{gray!6}{0.18 (\textcolor[HTML]{F5191C}{$\Delta$0.17})}\\
\addlinespace[0.3em]
\multicolumn{2}{l}{\textbf{M-B-ECM} \(C_{1,2} = 2\)}\\
\hspace{1em}Accuracy & 0.85 (\textcolor[HTML]{3B99B1}{$\Delta$0.14})\\
\hspace{1em}\cellcolor{gray!6}{False Negative} & \cellcolor{gray!6}{0.09 (\textcolor[HTML]{3B99B1}{$\Delta$-0.42})}\\
\hspace{1em}False Positive & 0.24 (\textcolor[HTML]{F5191C}{$\Delta$0.22})\\
\addlinespace[0.3em]
\multicolumn{2}{l}{\textbf{M-B-ECM Cat}}\\
\hspace{1em}\cellcolor{gray!6}{Accuracy} & \cellcolor{gray!6}{0.79 (\textcolor[HTML]{8BBD94}{$\Delta$0.08})}\\
\hspace{1em}False Negative & 0.1 (\textcolor[HTML]{3B99B1}{$\Delta$-0.41})\\
\hspace{1em}\cellcolor{gray!6}{False Positive} & \cellcolor{gray!6}{0.21 (\textcolor[HTML]{F5191C}{$\Delta$0.2})}\\
\bottomrule
\end{tabular}
\end{table}

\begin{figure}
\centering
\includegraphics{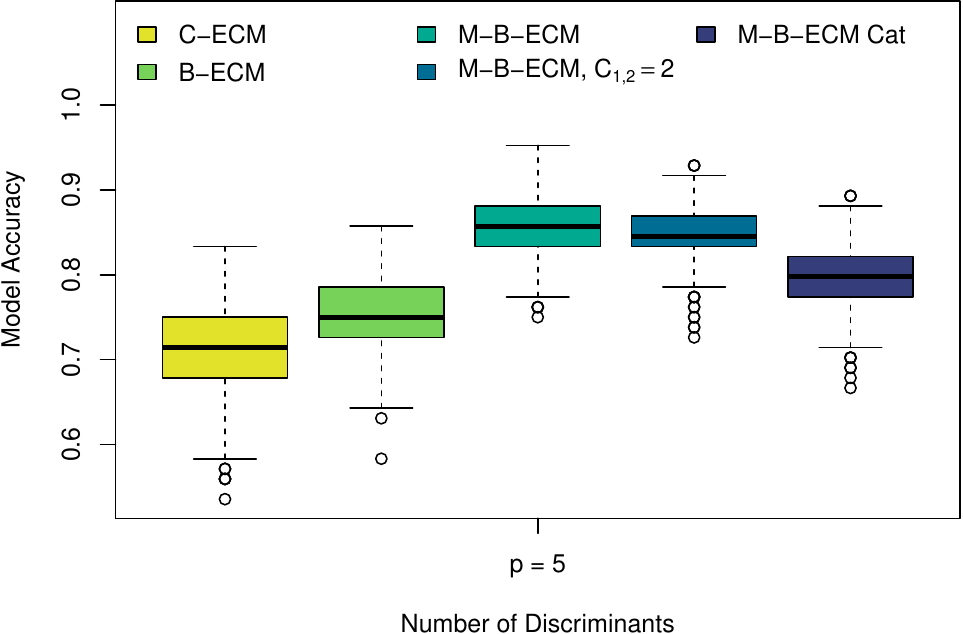}
\caption{Box plots showing the distribution of observed accuracy over the MC iterations for the seismic discriminant data set.}
\label{fig:dale-boxplot}
\end{figure}

\section{Discussion and Conclusions}\label{sec:discussion}

Results consistently suggest that a B-ECM model trained on observations with no missing data has accuracy similar to or greater than a C-ECM model, as well as a lower false negative rate.  The inclusion of additional training events where some data is missing for a B-ECM model consistently results in clear improvements in categorization accuracy and reductions in false negative rate over C-ECM.  These findings indicate that B-ECM is a worthy competitor for event identification in explosion monitoring.  The experiments in Sections \ref{sec:synthetic-exp} and \ref{sec:dale-exp} highlight these differences in performance.  While C-ECM has relatively low false positive rates for all data used and all sizes of \(p\) tested, B-ECM typically performs better in all other aspects for these difficult problems, and has a similar false positive rate for some problems.  Adjusting the values of the loss matrix for applications where a lower false positive rate is desirable facilitates such a reduction if desired.

A B-ECM model which can handle training data with missing elements, assumes the data is missing completely at random.  A combination of intuition and the evidence displayed in Fig. \ref{fig:differing-densities} leads us to believe missing data elements are not truly random.  An intuitive hypothetical would be a low yield weapon not meeting certain thresholds in order to start automated data recording, resulting in some missing data for such an event.  However, even with this nuanced model misspecification, simply having the ability to utilize data with missing entries results in significant performance gains.  We expect further improvements in performance for a B-ECM model which instead has a missing not at random \citep{little2019statistical} specification.

The decision theoretic framework for B-ECM provides intuition to tune ``knobs'', by changing values within the loss matrix, as in Equation (\ref{eq:loss-false-pos}).  Values within the loss matrix could be chosen subjectively, as is done in Section \ref{sec:exp}, or tuned empirically given a large enough data set, in order to target a reduction in false negatives or false positives.  When we increase the value of element \(C_{1,2}\), logically we are increasing the loss associated with erroneously choosing to categorize \(\tilde{\bm{y}}_{\tilde{p}}\) as not a detonation when the event truly is a detonation.  The intuitive relationship between the values of the elements of a loss matrix has utility in operations.  Just as important, changing the values of the loss matrix impacts the results as intended.  In our experiments, increasing \(C_{1,2}\) to a value of 2 in order to reduce the false negative rate did result in the intended reduction by roughly a factor of two, with only slight reductions in overall accuracy.

The B-ECM model is flexible enough to adapt to both detection and categorization applications.  Sets of elements from the \(K\) length vector from the predictive category distribution \(p(\tilde{\bm{z}}_K|\tilde{\bm{y}}_{\tilde{p}}, \bm{Y}_{N \times p}, \bm{\theta})\) can be summed to group event categories together, so that B-ECM can be used to categorize \(\tilde{\bm{y}}_{\tilde{p}}\) into anywhere from 2 to \(K\) groups with a corresponding adjustment of the loss matrix.  A larger number of actions to choose from corresponds with a decrease in accuracy.  This increase in the complexity of the problem results in the observed decrease in accuracy for both data sets explored in experimentation.  However, this effect appears to diminish as \(p\) increases, which we hypothesize is due to a decrease in the overlap of \(p(\tilde{\bm{y}}_{\tilde{p}}| \tilde{z}_k = 1, \bm{Y}_{N_k \times p}, \bm{\theta}_k)\) in higher dimension for all \(k\)  Our testing with event categorization utilizes data with missing entries.  Even though the categorization problem is more difficult, at times event categorization had higher accuracy than methods which did not take advantage of missing data.  This was particularly true in Section \ref{sec:dale-exp}.  These results illustrate how powerful utilizing all data can be for more difficult problems.

We introduced a novel decision framework for the Bayesian typicality index, which is able to detect outliers under the uncertainty imparted by using partial data observations.  The use of the Bayesian typicality index ensures that a \(\tilde{\bm{y}}_{\tilde{p}}\) which was not generated from one of the \(K\) events has the possibility of being categorized as an outlier.  Without the typicality index, a new observation would be required to be categorized under the finite set of event categories used for training.  The Bayesian typicality index is a B-ECM model specific twist on established methodology, which can take into account uncertainty related to using training data with missing entries. 

With the ability to handle partial observations and the use of Bayesian Decision theory, B-ECM has been shown to be an improvement over C-ECM, and there are many avenues for continuing improvement even further over the current state of the art.  Values of the matrix \(\bm{C}\) and typicality index significance level \(\tilde{\alpha}\) can be tuned to meet an application specific objective.  The utility of loss matrices where the elements are functions instead of constants can be investigated \citep{robert2007bayesian, berger2013statistical}.  We believe taking advantage of Bayesian model selection procedures would produce further improvements in accuracy, especially in applications where data is fused over multiple modalities.  In a preliminary investigation, we did not see consistent improvements in accuracy for the the logit function over the transform used in \citet{anderson2007mathematical}.  The preferable transform depends on the data itself, and the data can inform which transform to use as part of a Bayesian model selection procedure using Bayes Factors \citep{kass1995bayes}.  Additionally, model selection procedures can be used to allow uniquely sparse covariance matrices \citep{jordan1999learning, roverato2002hyper} for each category and models which assume missing data is not at random \citep{little2019statistical}.  For event categories where there is no correlation between particular discriminants, such a model may be a better fit to the data.

\section*{Author Contribution Statement}

\begin{itemize}
\item Scott Koermer: Conceptualization, Formal analysis, Investigation, Methodology, Software, Validation, Visualization, Writing - original draft, Writing - review \& editing
\item Joshua D. Carmichael: Data curation, Supervision, Methodology, Visualization, Writing - original draft, Writing - review \& editing
\item Brian J. Williams:  Methodology, Supervision, Writing - original draft, Writing - review \& editing
\end{itemize}

\section*{Data Availability}

Code and data to generate the results can be found within the \texttt{ezECM} package for the \textsf{R} programming language.  Further details are provided in Appendices \ref{sec:synthetic-data-code} and \ref{sec:data}.

\section*{Acknowledgement}

This manuscript has been authored with number LA-UR-24-30189 by Triad National Security under Contract with the U.S. Department of Energy, Office of Defense Nuclear Nonproliferation Research and Development. This research was funded by the National Nuclear Security Administration, Defense Nuclear Nonproliferation Research and Development (NNSA DNN R\&D). The authors acknowledge important interdisciplinary collaboration with scientists and engineers from LANL, LLNL, MSTS, PNNL, and SNL. Los Alamos National Laboratory is supported by the U.S. Department of Energy National Nuclear Security Administration under Contract No. DE-AC52-06NA25396. The United States Government retains and the publisher, by accepting the article for publication, acknowledges that the United States Government retains a non-exclusive, paid-up, irrevocable, world-wide license to publish or reproduce the published form of this manuscript, or allow others to do so, for United States Government purposes.

Research presented in this article was supported by the Laboratory Directed Research and Development program of Los Alamos National Laboratory under project number 20220188DR. Los Alamos National Laboratory is operated by Triad National Security, LLC, for the National Nuclear Security Administration of U.S. Department of Energy (Contract No. 89233218CNA000001).

\printbibliography

\appendix



\nomenclature[B]{\(\bm{\mu}_k\)}{Mean vector of length \(p\) a multivariate normal distribution for the \(k^{\mathrm{th}}\) event category.}
\nomenclature[B]{\(\bm{\Sigma}_k\)}{\(p \times p\) covariance matrix from the multivariate normal distribution of the \(k^{\mathrm{th}}\) event category.  Parameterizes the covariance between the columns of \(\bm{Y}_{N_k \times p}\)}
\nomenclature[B]{\(\bm{\pi}\)}{Vector of weights for the mixture of distributions.  \(\bm{1}_K^{\top}\bm{\pi} = 1\)}
\nomenclature[B]{\(\pi_k\)}{Weight of the \(k^{\mathrm{th}}\) event category distribution within the mixture of distributions.}
\nomenclature[B]{\(\tilde{\bm{z}}_K\)}{Vector of latent variables of length \(K\) used to model the event category of \(\tilde{\bm{y}}_{\tilde{p}}\).  A single \(k^{\mathrm{th}}\) element is equal to one, indicating inclusion into the \(k^{\mathrm{th}}\) group, with the remaining \(K -1\) elements equal to zero.}
\nomenclature[B]{\(\tilde{z}_k\)}{The \(k^{\mathrm{th}}\) element of \(\tilde{\bm{z}}_K\)}
\nomenclature[B]{\(\mathbb{E}[\tilde{\bm{z}}_K|\tilde{\bm{y}}_{\tilde{p}}, \bm{Y}_{N\times p}, \bm{\theta}]\)}{Vector of probability of each category given a new observation, a set of training data, and prior hyperparameters.  The posterior predictive expected value of each \(\tilde{z}_k\).}


\nomenclature[P]{\(\bm{\theta}_k\)}{Short hand for the group hyperparameters \(\{\bm{\eta}_k, \bm{\Psi}_k, \nu_k, \bm{\alpha}\}\) related to the prior distributions of parameters \(\{\bm{\mu}_k, \bm{\Sigma}_k, \pi_k\}\)}
\nomenclature[P]{\(\bm{\theta}\)}{The full collection of prior hyperparameters \(\{\bm{\theta}_1, \dots , \bm{\theta}_K\}\)}
\nomenclature[P]{\(\bm{\eta}_k\)}{Prior mean of \(\bm{\mu}_k\)}
\nomenclature[P]{\(\bm{\Psi}_k\)}{Prior covariance of \(\bm{\mu}_k\), and prior scale matrix for \(\bm{\Sigma}_k\)}
\nomenclature[P]{\(\nu_k\)}{Prior degrees of freedom for \(\bm{\Sigma}_k\)}
\nomenclature[P]{\(\bm{\alpha}\)}{Vector of prior concentration parameters for \(\bm{\pi}\)}
\nomenclature[P]{\(\alpha_k\)}{Prior concentration parameter for \(\pi_k\)}
\nomenclature[P]{\(\bm{\nu}\)}{Collection of all \(K\) values of \(\nu_k\).}
\nomenclature[P]{\(\bm{\eta}\)}{Collection of all \(K\) values of \(\bm{\eta}\).}
\nomenclature[P]{\(\bm{\Psi}\)}{Collection of all \(K\) values of \(\bm{\Psi}_k\).}


\nomenclature[A]{\(T\)}{Total number of iterative samples taken of a single random variable in Markov chain Monte Carlo.}
\nomenclature[A]{\(B\)}{Number of burn-in samples of a random variable discarded immediately after starting the Markov chain.}
\nomenclature[A]{\((x)^{(t)}\)}{The \(t^{\mathrm{th}}\) iterative sample of random variable \(x\) obtained in a Markov chain Monte Carlo algorithm.}
\nomenclature[A]{\(\bm{P}_{k, \ell}^{R}\)}{Matrix that permutes rows containing the missing entries in the \(\ell^{\mathrm{th}}\) column of \(\bm{Y}_{N_k \times p}\) to the top \(N_k^m\) rows.}
\nomenclature[A]{\(\bm{P}_{k, \ell}^{C}\)}{Matrix that permutes the \(\ell^{\mathrm{th}}\) column of \(\bm{Y}_{N_k \times p}\) to the first column.}
\nomenclature[A]{\(\underline{\bm{X}}\)}{Permuted matrix or vector, for arbitrary matrix or vector \(\bm{X}\), re-arranged for conditional sampling.}
\nomenclature[A]{\(\underline{\bm{y}}_{N_k^m \times 1}^{-}\)}{Vector subset of the missing entries within the first column of \(\underline{\bm{Y}}_{N_k \times p}\).}
\nomenclature[A]{\(\underline{\bm{y}}_{N_k^o \times 1}^{+}\)}{Vector subset of the observed entries within the first column of \(\underline{\bm{Y}}_{N_k \times p}\).}
\nomenclature[A]{\(\bmk{Z}_{K \times (T-B)}\)}{Matrix used for storing draws of the random variable \(\mathbb{E}[\tilde{\bm{z}}_K|\tilde{\bm{y}}_{\tilde{p}}, \bm{Y}_{N\times p}, \bm{\theta}]\).}
\nomenclature[A]{\(\underline{\bm{Y}}_{N_k^m \times (p -1)}\)}{Subset of the first \(N_k^m\) rows of \(\underline{\bm{Y}}_{N_k \times p}\), with the first column removed.}
\nomenclature[A]{\(\underline{\bm{Y}}_{N_k^o \times (p -1)}\)}{Subset of rows \(N_k^m + 1\) through \(N_k\) of \(\underline{\bm{Y}}_{N_k \times p}\), with the first column removed.}
\nomenclature[A]{\(q\)}{Integer parameter for thinning samples from the Markov Chain, where after burn-in only every \(q^{\mathrm{th}}\) sample is saved.}
\nomenclature[A]{\(p_k\)}{Value proportional to the probability of \(\tilde{\bm{y}}_{\tilde{p}}\) belonging to category \(k\) before normalization.  Equivalent to \(p(\tilde{\bm{y}}_{\tilde{p}}|\tilde{z}_k = 1, \bm{Y}_{N_k\times p}^{(t)}, \bm{\theta}_k)p(\tilde{z}_k = 1|\bm{Y}_{N \times p},\bm{\theta})\).}
\nomenclature[A]{\(p_K\)}{\(\sum_k = 1^K p_k\), the normalizing constant used to evaluate \(p(\tilde{z}_k = 1|\tilde{\bm{y}}_{\tilde{p}}, \bm{Y}_{N \times p}, \bm{\theta})\).}
\nomenclature[A]{\(\bm{y}_{k, \ell}\)}{Missing data elements of column \(\ell\) in \(\bm{Y}_{N_k \times p}\).}
\nomenclature[A]{\(\underline{\bm{Y}}_{N_k \times p}\)}{Permuted matrix of training data for the \(k^{\mathrm{th}}\) event category.  Equal to \(\bm{P}^{R}_{k, \ell} \bm{Y}_{N_k \times p} \bm{P}^{C}_{k,\ell}\) for a given column \(\ell\).}
\nomenclature[A]{\(t\)}{Indexes the \(T\) Monte Carlo samples of a random variable.}
\nomenclature[A]{\(q\)}{Integer thinning factor used to specify the retention of every \(q^{\mathrm{th}}\) sample in Markov chain Monte Carlo in an effort to reduce auto-correlation between what would ideally be independent samples.}
\nomenclature[A]{\((\bm{Y}_{N_k \times p})^{(t)}\)}{The union of the observed data \(\bm{Y}_k^{+}\) and the \(t^{\mathrm{th}}\) Markov chain Monte Carlo random sample of missing data \((\bm{Y}_k^{-})^{(t)}\) for the \(k^{\mathrm{th}}\) event category.}
\nomenclature[A]{\(\bm{X}_{[\bm{a}, \bm{b}]}\)}{Subset arbitrary matrix \(\bm{X}\) as indicated by vectors \(\bm{a}\) and \(\bm{b}\)}


\nomenclature[T]{\(a_k\)}{The \(k^{\mathrm{th}}\) categorization action.}
\nomenclature[T]{\(\bm{a}\)}{The set of actions \(\{a_1, a_2\}\) for binary categorization or \(\{a_1 , \dots , a_K\}\) for full \(K\) categorization.}
\nomenclature[T]{\(\bm{C}\)}{The loss matrix specifying the loss associated with each \(a_k\) along the columns and the true value of \(\tilde{z}_1\) or \(\tilde{\bm{z}}_K\) along the rows.  Of dimension \(2 \times 2\) for binary categorization and \(K\times K\) for categorization into the specific categories used in training.}
\nomenclature[T]{\(\bm{L}(\tilde{\bm{z}}_K, \bm{a})\)}{Vector loss evaluated for each action considered.  Vector values of length equal to the number of elements in \(\bm{a}\)}
\nomenclature[T]{\(\tilde{\alpha}\)}{Significance level used for hypothesis testing.}
\nomenclature[T]{\(C_{a,b}\)}{The element in the \(a^{\mathrm{th}}\) row and \(b^{\mathrm{th}}\) column of \(\bm{C}\).}
\nomenclature[T]{\(\delta(a_1)\)}{Indicator function for binary decisions, equal to 1 when \(a_1\) is the minimum expected loss action.}
\nomenclature[T]{\(Z\)}{Shorthand for the test statistic \((\tilde{\bm{y}}_{\tilde{p}} - \tilde{\bm{\mu}}_k)^{\top}\tilde{\bm{\Sigma}}_k^{-1}(\tilde{\bm{y}}_{\tilde{p}} - \tilde{\bm{\mu}}_k)/\tilde{p}\).}
\nomenclature[T]{\(\Phi\)}{A p-value calculated for the typicality index, with variability resulting from random draws of \(\bm{Y}_k^{-}\).  Conditioning on \(k, \bm{Y}_k^{+}, \bm{\theta}_k\) implied.}
\nomenclature[T]{\(p(\Phi)\)}{Distribution of random p-values, with variability resulting from random draws of \(\bm{Y}_k^{-}\).  Conditioning on \(k, \bm{Y}_k^{+}, \bm{\theta}\) implied.}
\nomenclature[T]{\(F^{\Phi}(\tilde{\alpha})\)}{Cumulative distribution function of \(\Phi\) evaluated at \(\tilde{\alpha}\).}


\nomenclature[D]{\(\bm{Y}_{N \times p}\)}{Totality of the training data}
\nomenclature[D]{\(\bm{Y}_{N_k \times p}\)}{Observations used for training the \(k^{\mathrm{th}}\) event category.}
\nomenclature[D]{\(\bm{Y}_k^{-}\)}{The set of elements where data is missing from \(\bm{Y}_{N_k \times p}\)}
\nomenclature[D]{\(\bm{Y}_k^{+}\)}{The set of elements where data is not missing from \(\bm{Y}_{N_k \times p}\)}
\nomenclature[D]{\(\bm{Y}^{-}\)}{The set of missing elements from \(\bm{Y}_{N \times p}\).  Equivalent to \(\{\bm{Y}_{1}^{-}, \dots , \bm{Y}_{K}^{-}\}\)}
\nomenclature[D]{\(\bm{Y}^{+}\)}{The set of elements from \(\bm{Y}_{N \times p}\) which are not missing.  Equivalent to \(\{\bm{Y}_{1}^{+}, \dots , \bm{Y}_{K}^{+}\}\)}
\nomenclature[D]{\(\tilde{\bm{y}}_{\tilde{p}}\)}{Vector of observations from a new event with unknown category of length \(\tilde{p}\)}
\nomenclature[D]{\(\tilde{\bm{y}}_p\)}{Specifies a \(\tilde{\bm{y}}_{\tilde{p}}\) with no missing values, where \(\tilde{p} = p\).}
\nomenclature[D]{\({\bm{y}}_{N_k^m \times 1}^{-}\)}{Vector subset of the missing entries within the first column of \({\bm{Y}}_{N_k \times p}\).}
\nomenclature[D]{\({\bm{y}}_{N_k^o \times 1}^{+}\)}{Vector subset of the observed entries within the first column of \({\bm{Y}}_{N_k \times p}\).}
\nomenclature[D]{\({\bm{Y}}_{N_k^m \times (p -1)}\)}{Subset of the first \(N_k^m\) rows of \({\bm{Y}}_{N_k \times p}\), with the first column removed.}
\nomenclature[D]{\({\bm{Y}}_{N_k^o \times (p -1)}\)}{Subset of rows \(N_k^m + 1\) through \(N_k\) of \({\bm{Y}}_{N_k \times p}\), with the first column removed.}
\nomenclature[D]{\(\bar{\bm{y}}_k\)}{\(p \times 1 \) vector containing the mean of each column of \(\bm{Y}_{N_k \times p}\); \(\bar{\bm{y}}_k = \bm{Y}_{N_k \times p}^{\top} \bm{1}_{N_k} / N_k\)}


\nomenclature[M]{\(\mathbb{E}[x]\)}{Expected value of random variable \(x\)}
\nomenclature[M]{\(\cup\)}{Union between two sets, herein often variables which do not have explicit dimensions.}
\nomenclature[M]{\(\mathbb{C}\mathrm{ov}(\bm{x})\)}{Covariance matrix between elements of random variable \(\bm{x}\).}
\nomenclature[M]{\(\left\lfloor{x}\right \rfloor\)}{The value of the random variable \(x\) rounded down to the nearest integer.}
\nomenclature[M]{\(\Gamma(x)\)}{The gamma function evaluated at \(x\).}
\nomenclature[M]{\(\Gamma_p(x)\)}{The multivariate gamma function evaluated at \(x\).}
\nomenclature[M]{\(\mathrm{vec}(\bm{X})\)}{Vectorization of matrix \(\bm{X}\), where columns of \(\bm{X}\) are sequentially arranged as the elements of a vector.}
\nomenclature[M]{\(\bm{A}\otimes \bm{B}\)}{Kronecker product of matrices \(\bm{A}\) and \(\bm{B}\).}
\nomenclature[M]{\(\mathbb{R}^p\)}{The space containing all \(p\) dimensional real numbers.}
\nomenclature[M]{\(\phi\)}{Statistical parameter to be inferred.  Used in notation for Bayes' rule.}
\nomenclature[M]{\(I_{x}(a,b)\)}{The regularized incomplete beta function.}
\nomenclature[M]{\(I[\cdot]\)}{Indicator function, equal to one if the expression in place of \(\cdot\) is true.}
\nomenclature[M]{\(\mathrm{logit}(x)\)}{Logit function on \(x\), equal to \(\ln(x) - \ln(1-x)\) where \(x \in (0,1)\).}
\nomenclature[M]{\(\mathrm{logit}^{-1}(x)\)}{Logistic function on \(x\), equal to \(1/(1 + \exp\{-x\})\) for \(x \in (-\infty , \infty)\).}
\nomenclature[M]{\(\forall\)}{For all elements, typically with respect to all elements in a set.}
\nomenclature[M]{\(\mathrm{tr}(\bm{X})\)}{Matrix trace of arbitrary square matrix \(\bm{X}\)}
\nomenclature[M]{\(x \gtrapprox y\)}{\(x\) greater than approximately \(y\)}
\nomenclature[M]{\(x \lessapprox y\)}{\(x\) less than approximately \(y\)}
\nomenclature[M]{\(\approx\)}{Approximately equal to}


\nomenclature[N]{\(\bm{t}_{p \times 1}\)}{Multivariate t-distributed random variable.}
\nomenclature[N]{\(t_{\nu}(\bm{\mu}, \bm{\Sigma})\)}{Multivariate t-distribution with \(\nu\) degrees of freedom, location parameter of \(\bm{\mu}\), and scale matrix of \(\bm{\Sigma}\).  Dimension can be defined according to the dimension of \(\bm{\mu}\) or \(\bm{\Sigma}\).}
\nomenclature[N]{\(\bm{T}_{N\times p}\)}{Matrix variate t-distributed random variable with \(N\) rows and \(p\) columns.}
\nomenclature[N]{\(T_{N \times p}(m, \bm{M}, \bm{\Sigma}, \bm{\Omega})\)}{Matrix variate t-distribution with degrees of freedom \(m\), mean matrix \(\bm{M}_{N \times p}\), row spread matrix \(\bm{\Sigma}_{N \times N}\), and column spread matrix \(\bm{\Omega}_{p \times p}\).}
\nomenclature[N]{\(\mathbb{F}(\tilde{p}, \tilde{\nu})\)}{F-distribution with degrees of freedom \(\tilde{p}\) and \(\tilde{\nu}\).}
\nomenclature[N]{\(\mathcal{N}(\bm{\mu}_p, \bm{\Sigma}_{p\times p})\)}{Multivariate normal distribution of dimension \(p\) with mean \(\bm{\mu}_p\) and covariance \(\bm{\Sigma}_{p \times p}\).}
\nomenclature[N]{\(\bm{M}_{N \times p}\)}{\(N \times p\) mean matrix for a \(N \times p\) matrix t-distributed random variable.}
\nomenclature[N]{\(\bm{\Omega}_{p \times p}\)}{\(p \times p\) column spread matrix for a \(N \times p\) matrix t-distributed random variable.}
\nomenclature[N]{\(\bm{\Sigma}\)}{Depending on context, either the covariance parameter of a multivariate normal distribution, scale matrix of a multivariate t-distribution, or row spread matrix of a matrix t-distribution.}
\nomenclature[N]{\(\bm{\mu}\)}{Mean vector of a multivariate normal distribution or multivariate t-distribution.}
\nomenclature[N]{\(\bm{\Lambda}_{11}, \bm{\Lambda}_{12}, \bm{\Lambda}_{21}, \bm{\Lambda}_{22}\)}{Short hand for linear combinations used for the conditional matrix t-distribution equations.}
\nomenclature[N]{\(\tilde{\bm{\mu}}_k\)}{Shorthand for the location vector of the multivariate t-distribution modeling the predictive density \(p(\tilde{\bm{y}}_{\tilde{p}}|\tilde{z}_k = 1, \bm{Y}_{N_k \times p}, \bm{\eta}_k, \bm{\Psi}_k, \nu_k)\).}
\nomenclature[N]{\(\tilde{\bm{\Sigma}}_k\)}{Shorthand for the scale matrix of the multivariate t-distribution modeling the predictive density \(p(\tilde{\bm{y}}_{\tilde{p}}|\tilde{z}_k = 1, \bm{Y}_{N_k \times p}, \bm{\eta}_k, \bm{\Psi}_k, \nu_k)\).}
\nomenclature[N]{\(F(\cdot)\)}{F-distribution cumulative distribution function.}
\nomenclature[N]{\(\tilde{\bm{\mu}}_k\)}{Predictive mean of \(\tilde{\bm{y}}_{\tilde{p}}\) in the multivariate t-distribution}
\nomenclature[N]{\(\tilde{\bm{\Sigma}}_k\)}{Predictive scale matrix of \(\tilde{\bm{y}}_{\tilde{p}}\) in the multivariate t-distribution}
\nomenclature[N]{\(\tilde{\nu}_k\)}{Degrees of freedom for predictive distribution \(p(\tilde{\bm{y}}_{\tilde{p}}|\tilde{z}_k = 1, \bm{Y}_{N_k\times p}, \bm{\eta}_k, \bm{\Psi}_k,\nu_k)\) related to event category \(k\)}
\nomenclature[N]{\(m\)}{Degrees of freedom for a Matrix t-distribution}
\nomenclature[N]{\(\mathcal{W}^{-1}(a, \bm{B})\)}{Inverse Wishart distribution with degrees of freedom \(a\) and scale matrix \(\bm{B}\)}
\nomenclature[N]{\(\mathcal{N}_{N,p}(\bm{M}, \bm{\Sigma} \otimes \bm{\Psi})\)}{Matrix normal distribution for a random \(N \times p\) matrix with mean \(\bm{M}\), row scale matrix \(\bm{\Sigma}\) and column scale matrix \(\bm{\Psi}\)}


\nomenclature[O]{\(\bm{1}_{N}\)}{Vector of ones of length \(N\)}
\nomenclature[O]{\(\bm{0}_{a \times b}\)}{Vector or matrix of zeros with \(a\) rows and \(b\) columns.}
\nomenclature[O]{\(\bm{J}_N\)}{Square matrix of ones with dimension \(N\)}
\nomenclature[O]{\(\bm{J}_{a \times b}\)}{Rectangular matrix of ones with \(a\) rows and \(b\) columns.}
\nomenclature[O]{\(\bm{I}_N\)}{Identity matrix of dimension \(N\)}


\nomenclature[I]{\(k\)}{Index of event groups}
\nomenclature[I]{\(K\)}{Number of event groups}
\nomenclature[I]{\(N\)}{Total number of event observations in the training data}
\nomenclature[I]{\(N_k\)}{Number of event observations for the \(k^{\mathrm{th}}\) event group}
\nomenclature[I]{\(p\)}{Total number of discriminants considered for each event category.}
\nomenclature[I]{\(\tilde{p}\)}{Dimension of \(\tilde{\bm{y}}_{\tilde{p}}\).  \(\tilde{p}\) must be an integer which satisfies \(0 < \tilde{p} \leq p\).}
\nomenclature[I]{\(\ell\)}{Indexes the columns of \(\bm{Y}_{N_k \times p}\).}
\nomenclature[I]{\(N_k^m\)}{The number of missing entries in the \(\ell^{\mathrm{th}}\) column of \(\bm{Y}_{N_k \times p}\).  There are \(p\) values of \(N_k^m\) associated with each \(\bm{Y}_{N_k \times p}\), with the index often implied to simplify notation, but referred to as \((N_k^m)^{\ell}\) when necessary.}
\nomenclature[I]{\(N^{\mathrm{train}}\)}{Total number of training events which have zero missing values.  Used to describe the Monte Carlo experiments.}
\nomenclature[I]{\(N^{\mathrm{train}^{-}}\)}{Total number of training events which have one or more missing values.  Use to describe the Monte Carlo experiments.}
\nomenclature[I]{\(N^{\mathrm{test}}\)}{Total number of events used for testing to measure performance for the Monte Carlo experiments.  Includes event observations which have no missing data and some missing data.}
\nomenclature[I]{\(N_k^o\)}{The number of available entries in the \(\ell^{\mathrm{th}}\) column of \(\bm{Y}_{N_k \times p}\).  There are \(p\) values of \(N_k^o\) associated with each \(\bm{Y}_{N_k \times p}\), with the index often implied to simplify notation, but referred to as \((N_k^o)^{\ell}\) when necessary.}
\nomenclature[I]{\(\bm{N}\)}{A vector containing all values \([N_1, \dots , N_K]\).}


\section{Nomenclature}\label{sec:nomencl}

\vspace{-1cm}

\printnomenclature

\section{The Statistical Model}\label{sec:the-model}

\subsection{Data Likelihood}\label{sec:normal-likelihood}

Let \((\bm{y}^{i}_k)_{p \times 1}\) be the \(i^{\mathrm{th}}\) of \(N_k\) observations of \(p\) discriminants from the \(k^{\mathrm{th}}\) event where \(k \in \{1, \dots , K\}\).  Each observation in the training data can be indexed as such.  There is a total of \(N_k\) observations for each event category, making the total number of event data \(N = \sum_{k = 1}^K N_k\).  We assume that each \((\bm{y}^{i}_k)_{p \times 1}\) is obtained from a multivariate normal distribution with a mean vector \(\bm{\mu}_k\) and covariance \(\bm{\Sigma}_k\).  Any random vector \(\bm{y}_p\) marginally drawn from the data is assumed to be sampled from a mixture of normal distributions,
\begin{equation}
\bm{y}_p \sim \sum_{k = 1}^{K} \pi_k \mathcal{N}(\bm{\mu}_k, \bm{\Sigma}_k)
\end{equation}
where \(\pi_k\) is a marginal probability of an event category and \(\sum_{k = 1}^K \pi_k = 1\).  Let \(\bm{Y}_{N \times p}\) denote all data observations, \(\bm{\mu} \equiv \{\bm{\mu}_1, \dots , \bm{\mu}_K\}\), \(\bm{\Sigma} \equiv \{\bm{\Sigma}_1, \dots , \bm{\Sigma}_K\}\), and \(\bm{\pi} \equiv \{\pi_1, \dots , \pi_K\}\).  Because the event category for each \(\bm{y}_k\) is already known, the likelihood is proportional to:
\begin{align}
p(\bm{Y}_{N \times p}|\bm{\mu}, \bm{\Sigma}, \bm{\pi}) &\propto \prod_{k = 1}^K \prod_{i = 1}^{N_k} \pi_k|\bm{\Sigma}_k|^{-\frac{1}{2}}\exp\left\{-\frac{1}{2}\left(\bm{y}^{i}_k - \bm{\mu}_k\right)^{\top} \bm{\Sigma}_k^{-1}\left(\bm{y}^{i}_k - \bm{\mu}_k\right)\right\}\notag \\
&\propto \prod_{k = 1}^K \pi_k^{N_k}|\bm{\Sigma}_k|^{-\frac{N_k}{2}}\exp\left\{-\frac{1}{2}\sum_{i = 1}^{N_k}\left(\bm{y}^{i}_k - \bm{\mu}_k\right)^{\top} \bm{\Sigma}_k^{-1}\left(\bm{y}^{i}_k - \bm{\mu}_k\right)\right\}
\label{eq:likelihood}
\end{align}

\subsection{Prior Distributions}\label{sec:priors}

We choose conjugate prior \citep{hoff2009first} distributions for their mathematical properties, advantageous for later derivations.  A prior distribution is required for the parameters \(\pi_k, \bm{\Sigma}_k\) and \(\bm{\mu}_k\) associated with each event category.  We specify the conditional priors \\\(p(\bm{\mu}_k, \bm{\Sigma}_k) = p(\bm{\mu}_k|\bm{\Sigma}_k, \bm{\eta}_k)p(\bm{\Sigma}_k|\bm{\Psi}_k, \nu_k)\).  \(p(\bm{\mu}_k|\bm{\Sigma}_k, \bm{\eta}_k)\) is a multivariate normal distribution with a mean vector of \(\bm{\eta}_k\) and a covariance of \(\bm{\Sigma}_k\), while \(p(\bm{\Sigma}_k|\bm{\Psi}_k, \nu_k)\) is an inverse Wishart distribution \citep{hoff2009first, gupta2018matrix} with a scale matrix of \(\bm{\Psi}_k\) and degrees of freedom \(\nu_k\).  The prior on the vector of mixture parameters \(p(\bm{\pi}|\bm{\alpha})\) is a Dirichlet distribution \citep{ng2011dirichlet} with a parameter vector \(\bm{\alpha}\) of length \(K\).

The choice of these prior distributions allows for closed form posterior distributions of the parameters \(\bm{\pi}, \bm{\mu}_k\), and \(\bm{\Sigma}_k\), as well as closed form integrals for the predictive distributions \(p(\tilde{\bm{y}}_{\tilde{p}}|\bm{Y}_{N_k\times p}, \bm{\Psi}_k, \bm{\eta}_{k}, \nu_k, \tilde{z}_k = 1)\) and \(p(\tilde{z}_k = 1|\bm{Y}_{N\times p}, \bm{\alpha})\).

In our experiments, we use the default prior parameter settings from the \textsf{R} package \texttt{ezECM} for a logit transformation of the data.  The default for \(p(\bm{\mu}_k|\bm{\Sigma}_k, \bm{\eta}_k)\) is \(\bm{\eta}_k = \bm{0}_{p\times 1}\).  In combination with the values for \(p(\bm{\Sigma}_k|\bm{\Psi}_k, \nu_k)\) of \(\bm{\Psi}_k = \bm{I}_p\) and \(\nu_k = p\), the marginal likelihood of the data becomes fairly diffuse and is in fact matrix variate Cauchy \citep{gupta2018matrix}, where the expectation and the variance are undefined. For prediction, the degrees of freedom for category \(k\) then becomes \(N_k + 1\).  The the setting of \(\bm{\alpha}\) in the prior distribution \(p(\bm{\pi}|\bm{\alpha})\) is \(\bm{\alpha} = \bm{1}_K \times 1/2\), equivalent to a fairly non-informative Jeffreys and reference prior \citep{yang1996catalog}.

\subsection{Marginal Likelihood of the Event Data}\label{sec:data-likelihood}

To reduce computation time we choose to utilize analytical solutions to integrals instead of numerical solutions when reasonable.  Deriving the marginal likelihood of the \(k^{\mathrm{th}}\) event data \(p(\bm{Y}_{N_k \times p}|\bm{\eta}_k, \bm{\Psi}_k, \nu_k)\) is useful for later deriving the predictive distribution \(p(\tilde{\bm{y}}|\tilde{z}_k = 1, \bm{Y}_{N_k\times p}, \bm{\eta}_k, \bm{\Psi}_k,\nu_k)\) in Appendix \ref{sec:predictive-y}.

Each \(p(\bm{Y}_{N_k \times p}|\bm{\eta}_k, \bm{\Psi}_k, \nu_k)\) requires integration over \(\bm{\mu}_k \in \mathbb{R}^p\) and \(\bm{\Sigma}_k\) over a space of symmetric positive definite matrices such that:
\begin{equation}
p(\bm{Y}_{N_k \times p}|\bm{\eta}_k, \bm{\Psi}_k, \nu_k) = \int_{\bm{\Sigma}} \int_{\mathbb{R}^p} p(\bm{Y}_{N_k \times p}|\bm{\mu}_k, \bm{\Sigma}_k) p(\bm{\mu}_k, \bm{\Sigma}_k| \bm{\eta}_k, \bm{\Psi}_k, \nu_k)d\bm{\mu}_k d\bm{\Sigma}_k.
\end{equation}

The stepwise solution to this integral requires the trace identities \(tr(\bm{A}\bm{B}\bm{C}) = tr(\bm{C}\bm{A}\bm{B})\) for matrices conformal for multiplication \citep{harville1998matrix}, as well as the matrix determinant lemma \citep{harville1998matrix}.
\begin{equation}
\begin{split}
&p(\bm{Y}_{N_k \times p}|\bm{\eta}_k, \bm{\Psi}_k, \nu_k)\\
&=  \int_{\bm{\Sigma}_k} \int_{\bm{\mu}_k} p(\bm{Y}_{N_k \times p}|\bm{\mu}_k, \bm{\Sigma}_k) p(\bm{\mu}_k|\bm{\Sigma}_k, \bm{\eta}_k)p(\bm{\Sigma}_k|\bm{\Psi}_k, \nu_k)d\bm{\mu}_k d\bm{\Sigma}_k\\
&\propto \int_{\bm{\Sigma}_k} \int_{\bm{\mu}_k}\left[\vphantom{|\bm{\Sigma}_k|^{-\frac{(\nu_k + p + 1)}{2}}}|\bm{\Sigma}_k|^{-\frac{N_k}{2}}\exp\left\{-\frac{1}{2}\sum_{i = 1}^{N_k}\left(\bm{y}^{i}_k - \bm{\mu}_k\right)^{\top} \bm{\Sigma}_k^{-1}\left(\bm{y}^{i}_k - \bm{\mu}_k\right)\right\}|\bm{\Sigma}_k|^{-\frac{1}{2}}\right.\\
&\qquad\left.\vphantom{\exp\left\{-\frac{1}{2}\sum_{i = 1}^{N_k}\left(\bm{y}^{i}_k - \bm{\mu}_k\right)^{\top}\right\}}\exp\left\{-\frac{1}{2}\left(\bm{\mu}_k - \bm{\eta}_k\right)^{\top}\bm{\Sigma}_k^{-1}\left(\bm{\mu}_k - \bm{\eta}_k\right)\right\}|\bm{\Sigma}_k|^{-\frac{(\nu_k + p + 1)}{2}}e^{-\frac{1}{2}\mathrm{tr}\left(\bm{\Psi}_k \bm{\Sigma}_k^{-1}\right)}\right]d\bm{\mu}_k d\bm{\Sigma}_k\\
&\propto \int_{\bm{\Sigma}_k}|\bm{\Sigma}_k|^{-\frac{(N_k + \nu_k + p + 1)}{2}} \\
& \qquad \exp\left\{-\frac{1}{2}\mathrm{tr}\left(\bm{\Sigma}_k^{-1} \left(\begin{bmatrix}
\bm{Y}_{p \times N_k}^{\top} & \bm{\eta}_k
\end{bmatrix}
\left(\bm{I}_{N_k+1} - \frac{1}{N_k + 1} \bm{J}_{N_k+1}\right)
\begin{bmatrix}
\bm{Y}_{N_k\times p} \\
\bm{\eta}_k^{\top}
\end{bmatrix}
 + \bm{\Psi}_k
\right)\right)\right\} d\bm{\Sigma}_k\\
&\propto  \left|\left(\bm{Y}_{N_k \times p} - \bm{1}_{N_k}\bm{\eta}_k^{\top}\right)^{\top}\left(\bm{I}_{N_k} - \frac{1}{N_k + 1} \bm{J}_{N_k} \right) \left(\bm{Y}_{N_k \times p} - \bm{1}_{N_k}\bm{\eta}_k^{\top}\right) + \bm{\Psi}_k\right|^{-\frac{(N_k + \nu_k)}{2}}\\
&\propto \left|\bm{I}_{N_k} + \left(\bm{I}_{N_k} - \frac{1}{N_k + 1} \bm{J}_{N_k} \right)\left(\bm{Y}_{N_k \times p} - \bm{1}_{N_k}\bm{\eta}_k^{\top}\right) \bm{\Psi}_k^{-1} \left(\bm{Y}_{N_k \times p} - \bm{1}_{N_k}\bm{\eta}_k^{\top}\right)^{\top} \right|^{-\frac{((\nu_k + 1 - p) + N_k + p - 1)}{2}}
\end{split}
\end{equation}
We can recognize the above as proportional to a matrix t-distribution, where \(\bm{Y}_{N_k \times p} \sim T_{N_k, p}(\nu_k + 1 - p, \bm{1}_{N_k}(\bm{\eta}_k)^{\top}, \bm{I}_{N_k} + \bm{J}_{N_k}, \bm{\Psi}_k)\).  The choice of \(\nu_k\) has an effect on the properties of this distribution, and later derivations.  If the degrees of freedom, \(\nu_k + 1 -p\), are equal to one the distribution is matrix variate Cauchy \citep{gupta2018matrix}.  The degrees of freedom must be greater than 2 for the variance to be defined \citep{gupta2018matrix}.

\subsection{Conditional Missing Data Distributions}\label{sec:missing-data-conditionals}

There is typically not a simple closed form for \(p(\bm{Y}^{-}_k|\bm{Y}^{+}_k, \bm{\eta}_k, \bm{\Psi}_k, \nu_k)\), although we will cover the case in this section when a simple density is available.  Details on the hyperparameters \(\bm{\eta}, \bm{\Psi}\) and \(\bm{\nu}\) can be found in Appendices \ref{sec:priors} and \ref{sec:data-likelihood}.  Instead, we derive distributions of subsets of the missing data, conditional on all other data, which are available in closed form as recognizable densities.  We choose each subset of \(\bm{Y}^{-}_k\) to be the full set of missing entries found in a single column of \(\bm{Y}_{N_k \times p}\).  Depending on the application details using a single row may be advantageous, and in unlikely cases the set can be matrix of missing entries instead of a vector.  The end goal is implementation within a Gibbs sampler \citep{hoff2009first, robert1999monte} to obtain marginal samples from \(p(\bm{Y}^{-}_k|\bm{Y}^{+}_k, \bm{\eta}_k, \bm{\Psi}_k, \nu_k)\).  With the goal of numerical integration in mind, the derivation for the posterior distributions required for MCMC are as follows.

First, partition \(\bm{Y}_{N_k \times p}, \bm{\eta}_k\), and \(\bm{\Psi}_k\) in a similar manner to Appendix \ref{sec:matrix-t}
\begin{equation}
\bm{Y}_{N_k \times p} = \raisebox{-0.6666666\baselineskip}{$\begin{blockarray}{ccc}
\begin{block}{[cc]c}
\bm{y}^-_{N_k^m \times 1} & \bm{Y}_{N^m_k \times (p - 1)}\bigstrut[t] & N^m_k \\
\bm{y}^{+}_{N_k^o \times 1} & \bm{Y}_{N^o_k \times (p - 1)} \bigstrut[b] & N^o_k \\
\end{block}
1 & (p-1) & 
\end{blockarray}$} , \  \bm{\eta}_k = \raisebox{-0.3333333\baselineskip}{$\begin{blockarray}{cc}
\begin{block}{[c]c}
\eta_k^{-} \bigstrut[t] & 1 \\
\bm{\eta}_k^{+} \bigstrut[b] & (p-1)\\
\end{block}
\end{blockarray}$}, \bm{\Psi}_k = \raisebox{-0.6666666\baselineskip}{$\begin{blockarray}{ccc}
\begin{block}{[cc]c}
\Psi_{k(11)} & \bm{\Psi}_{k(12)} \bigstrut[t] & 1\\
\bm{\Psi}_{k(21)} & \bm{\Psi}_{k(22)} \bigstrut[b] & (p-1)\\
\end{block}
1 & (p-1) & 
\end{blockarray}$}
\end{equation}
Using this new notation, we want to derive the conditional distribution \\ \(p(\bm{y}^{-}_{N_k^m \times 1}|\bm{y}^{+}_{N_k^o \times 1}, \bm{Y}_{N_k^m \times (p-1)}, \bm{Y}_{N_k^o \times (p-1)}, \bm{\eta}_k, \bm{\Psi}_k, \nu_k)\).  \(\bm{y}^{-}_{N_k^m \times 1}\) are the missing entries in a single column of \(\bm{Y}_{N_k \times p}\).  The number of missing entries in this column are \(N_k^m\) and the number of observed entries in the same column are \(N_k^o\) such that \(N_k^m + N_k^o = N_k\).  The remaining \(p-1\) columns are represented by the matrices \(\bm{Y}_{N_k^m \times (p-1)}\) and \(\bm{Y}_{N_k^o \times (p-1)}\).  While some of the entries in these matrices are in fact missing, we will condition on all entries represented by these matrices, where the values are imputed in other steps of the MCMC.  See Algorithm \ref{alg:training}.

To obtain  \(p(\bm{y}^{-}_{N_k^m \times 1}|\bm{y}^{+}_{N_k^o \times 1}, \bm{Y}_{N_k^m \times (p-1)}, \bm{Y}_{N_k^o \times (p-1)}, \bm{\eta}_k, \bm{\Psi}_k, \nu_k)\), first we use the conditional equations in Appendix \ref{sec:matrix-t} to obtain \(p(\bm{y}^{-}_{N_k^m \times 1}, \bm{y}^{+}_{N_k^o \times 1}| \bm{Y}_{N_k^m \times (p-1)}, \bm{Y}_{N_k^o \times (p-1)}, \bm{\eta}_k, \bm{\Psi}_k, \nu_k)\).
\begin{equation}
\begin{split}
& p(\bm{y}^{-}_{N_k^m \times 1}, \bm{y}^{+}_{N_k^o \times 1}| \bm{Y}_{N_k^m \times (p-1)}, \bm{Y}_{N_k^o \times (p-1)}, \bm{\eta}_k, \bm{\Psi}_k, \nu_k) =\\
& \qquad T_{N_k, 1}\left(\nu_k,  \eta_k^{-}\bm{1}_{N_k} + \left(\begin{bmatrix}
\bm{Y}_{N^m_k \times (p - 1)}\\
 \bm{Y}_{N^o_k \times (p - 1)} 
\end{bmatrix} - \bm{1}_{N_k} (\bm{\eta}_k^{+})^{\top}\right)\bm{\Psi}_{k(22)}^{-1}\bm{\Psi}_{k(21)}, \right.\\
&\qquad \qquad \left. \bm{I}_{N_k} + \bm{J}_{N_k} + \left(\begin{bmatrix}
\bm{Y}_{N^m_k \times (p - 1)}\\
 \bm{Y}_{N^o_k \times (p - 1)} 
\end{bmatrix} - \bm{1}_{N_k} (\bm{\eta}_k^{+})^{\top}\right)\bm{\Psi}_{k(22)}^{-1} \left(\begin{bmatrix}
\bm{Y}_{N^m_k \times (p - 1)}\\
 \bm{Y}_{N^o_k \times (p - 1)} 
\end{bmatrix} - \bm{1}_{N_k} (\bm{\eta}_k^{+})^{\top}\right)^{\top}, \right.\\
&\qquad \qquad \left. \vphantom{\left(\begin{bmatrix}
\bm{Y}_{N^m_k \times (p - 1)}\\
\bm{Y}_{N^o_k \times (p - 1)} 
\end{bmatrix}\right)}\Psi_{k(11)} - \bm{\Psi}_{k(12)}\bm{\Psi}_{k(22)}^{-1}\bm{\Psi}_{k(21)}\right)
\end{split}
\end{equation}
In the event that \(N_k^m = N_k\) then no further conditioning is required.  Otherwise, the conditional t-distribution equations in Appendix \ref{sec:matrix-t} are used again to find \\
\(p(\bm{y}^{-}_{N_k^m \times 1}|\bm{y}^{+}_{N_k^o \times 1}, \bm{Y}_{N_k^m \times (p-1)}, \bm{Y}_{N_k^o \times (p-1)}, \bm{\eta}_k, \bm{\Psi}_k, \nu_k)\), given by 
\begin{align}
\bm{y}^-_{N_k^m \times 1}| \bm{y}^{+}_{N_k^o \times 1}, \bm{Y}_{N^m_k \times (p - 1)},   \bm{Y}_{N^o_k \times (p - 1)} &\sim T_{N_{k}^m, 1}\left(\nu_k + N_k^o,\bm{M}^{1r} + \bm{\Lambda}_{12}\bm{\Lambda}_{22}^{-1}(\bm{y}^{+}_{N_k^o \times 1}- \bm{M}^{2r}),\right. \notag\\
&\qquad \qquad \bm{\Lambda}_{11} - \bm{\Lambda}_{12}\bm{\Lambda}_{22}^{-1}\bm{\Lambda}_{21}, \notag\\
&\qquad \qquad \left. \Omega + (\bm{y}^{+}_{N_k^o \times 1} - \bm{M}^{2r})^{\top}\bm{\Lambda}_{22}^{-1}(\bm{y}^{+}_{N_k^o \times 1} - \bm{M}^{2r})\right),
\end{align}
where
\begin{align}
\bm{M}^{1r} &= \eta_k^{-} \bm{1}_{N_k^m} + ( \bm{Y}_{N^m_k \times (p - 1)} - \bm{1}_{N_k^m}(\bm{\eta}_k^{+})^{\top})\bm{\Psi}_{k(22)}^{-1}\bm{\Psi}_{k(21)}\\
\bm{M}^{2r} &= \eta_k^{-} \bm{1}_{N_k^o} + (\bm{Y}_{N^o_k \times (p - 1)} - \bm{1}_{N_k^o}(\bm{\eta}_k^{+})^{\top})\bm{\Psi}_{k(22)}^{-1}\bm{\Psi}_{k(21)}\\
\bm{\Lambda}_{11} &= ( \bm{Y}_{N^m_k \times (p - 1)} - \bm{1}_{N_k^m}(\bm{\eta}_k^{+})^{\top})\bm{\Psi}_{k(22)}^{-1}( \bm{Y}_{N^m_k \times (p - 1)} - \bm{1}_{N_k^m}(\bm{\eta}_k^{+})^{\top})^{\top} + \bm{I}_{N_k^{m}} + \bm{J}_{N_k^{m}}\\
\bm{\Lambda}_{12} &= ( \bm{Y}_{N^m_k \times (p - 1)} - \bm{1}_{N_k^m}(\bm{\eta}_k^{+})^{\top})\bm{\Psi}_{k(22)}^{-1} (\bm{Y}_{N^o_k \times (p - 1)} - \bm{1}_{N_k^o}(\bm{\eta}_k^{+})^{\top})^{\top}+ \bm{J}_{N_k^m \times N_k^{o}}\\
\bm{\Lambda}_{21} & = \bm{\Lambda}_{12}^{\top}\\
\bm{\Lambda}_{22} &= (\bm{Y}_{N^o_k \times (p - 1)} - \bm{1}_{N_k^o}(\bm{\eta}_k^{+})^{\top})\bm{\Psi}_{k(22)}^{-1}(\bm{Y}_{N^o_k \times (p - 1)} - \bm{1}_{N_k^o}(\bm{\eta}_k^{+})^{\top})^{\top} + \bm{I}_{N_k^{o}} + \bm{J}_{N_k^o}\\
\Omega &= \Psi_{k(11)} - \bm{\Psi}_{k(12)}\bm{\Psi}_{k(22)}^{-1}\bm{\Psi}_{k(21)}.
\end{align}

\subsection{Predictive Distributions}

These Appendix sections detail the derivations for the densities on the right hand side of
\begin{equation}
p(\tilde{z}_k = 1|\tilde{\bm{y}}_{\tilde{p}}, \bm{Y}_{N \times p}, \bm{\eta}, \bm{\Psi}, \bm{\nu}, \bm{\alpha}) = \frac{p(\tilde{\bm{y}}_{\tilde{p}}|\tilde{z}_k = 1, \bm{Y}_{N_k\times p}, \bm{\eta}_k, \bm{\Psi}_k,\nu_k)p(\tilde{z}_k = 1|\bm{Y}_{N \times p},\bm{\alpha})}{p(\tilde{\bm{y}}_{\tilde{p}}|\bm{Y}_{N\times p}, \bm{\eta}, \bm{\Psi}, \bm{\nu}, \bm{\alpha})},
\end{equation}
where \(p(\tilde{\bm{y}}_{\tilde{p}}|\tilde{z}_k = 1, \bm{Y}_{N_k\times p}, \bm{\eta}_k, \bm{\Psi}_k,\nu_k)\) is the predictive distribution for \(\tilde{\bm{y}}_{\tilde{p}}\), \(p(\tilde{z}_k = 1|\bm{Y}_{N \times p},\bm{\alpha})\) is the predictive distribution of category \(k\), and \(p(\tilde{\bm{y}}_{\tilde{p}}|\bm{Y}_{N\times p}, \bm{\eta}, \bm{\Psi}, \bm{\nu}, \bm{\alpha})\) is the predictive distribution marginalized over all \(K\) categories.

\subsubsection{\(p(\tilde{\bm{y}}_{\tilde{p}}|\tilde{z}_k = 1, \bm{Y}_{N_k\times p}, \bm{\eta}_k, \bm{\Psi}_k,\nu_k)\)}\label{sec:predictive-y}

The predictive distribution of \(\tilde{\bm{y}}_{\tilde{p}}\) can be found by solving for the integral
\begin{align}
p(\tilde{\bm{y}}_{\tilde{p}}|\tilde{z}_k = 1, \bm{Y}_{N_k \times p},\bm{\eta}_k, \bm{\Psi}_k, \nu_k) =& \int_{\bm{\Sigma}} \int_{\mathbb{R}^p} p(\tilde{\bm{y}}_{\tilde{p}}|\tilde{z}_k = 1, \bm{\mu}_k, \bm{\Sigma}_k) \notag \\
& \qquad p(\bm{\mu}_k, \bm{\Sigma}_k| \bm{Y}_{N_k \times p}, \bm{\eta}_k, \bm{\Psi}_k, \nu_k)d\bm{\mu}_k d\bm{\Sigma}_k,
\end{align}
and then by using the properties of the marginal multivariate t-distribution for \(\tilde{p} < p\).  However, because we already have \(p(\bm{Y}_{N_k \times p}|\tilde{z}_k = 1, \bm{\eta}_k, \bm{\Psi}_k, \nu_k)\), and the properties of conditional distributions for matrix t-distributed random variables, we choose to exploit conditional probability for this illustration; \\ \(p(\tilde{\bm{y}}_{p}, \bm{Y}_{N_k \times p}|\tilde{z}_k = 1, \bm{\eta}_k, \bm{\Psi}_k, \nu_k) = p(\tilde{\bm{y}}_{p}| \tilde{z}_k = 1, \bm{Y}_{N_k \times p}, \bm{\eta}_k, \bm{\Psi}_k, \nu_k)p(\bm{Y}_{N_k \times p}|\tilde{z}_k = 1, \bm{\eta}_k, \bm{\Psi}_k, \nu_k)\).  Note that we switch to the notation \(\tilde{\bm{y}}_{p}\) from \(\tilde{\bm{y}}_{\tilde{p}}\) to indicate a full \(\tilde{p} = p\) dimensional vector.  Properties of the marginal distribution of a multivariate t-distribution are applied at the end for the the case where \(\tilde{p} < p\).

Given the result in Appendix \ref{sec:data-likelihood}  , we can write the joint distribution \(p(\tilde{\bm{y}}_{p}, \bm{Y}_{N_k \times p}|\tilde{z}_k = 1, \bm{\eta}_k, \bm{\Psi}_k, \nu_k)\) as 
\begin{align}
&p(\tilde{\bm{y}}_{p}, \bm{Y}_{N_k \times p}|\tilde{z}_k = 1, \bm{\eta}_k, \bm{\Psi}_k, \nu_k) \propto \notag\\
&\qquad\left|\bm{I}_{N_k + 1} + \left(\bm{I}_{N_k + 1} - \frac{1}{N_k + 1 + 1} \bm{J}_{N_k + 1} \right)\right. \notag \\
&\left.\left(\begin{bmatrix}
\tilde{\bm{y}}_{p}^{\top}\\
\bm{Y}_{N_k \times p}
\end{bmatrix} - \bm{1}_{N_k + 1}\bm{\eta}_k^{\top}\right) \bm{\Psi}_k^{-1} \left(\begin{bmatrix}
\tilde{\bm{y}}_{p}^{\top}\\
\bm{Y}_{N_k \times p}
\end{bmatrix} - \bm{1}_{N_k + 1}\bm{\eta}_k^{\top}\right)^{\top} \right|^{-\frac{((\nu_k + 1 - p) + N_k + 1 + p - 1)}{2}},
\end{align}
which is \(T_{(N_k + 1) \times p}(\nu_k + 1 - p, \bm{1}_{N_k +1}\bm{\eta}^{\top}_k, \bm{I}_{N_k + 1} + \bm{J}_{N_k + 1}, \bm{\Psi}_k)\).  Then the conditioning equations in Appendix \ref{sec:matrix-t} can be used to provide \(p(\tilde{\bm{y}}_{p}|\tilde{z}_k = 1, \bm{Y}_{N_k\times p}, \bm{\eta}_k, \bm{\Psi}_k,\nu_k)\).  Rearranging terms as well as using the relationship between the matrix t-distribution and the multivariate t-distribution in Appendix \ref{sec:matrix-t} produces a density proportional to a multivariate t-distribution noted as \(t_{p}(N_k + \nu_k + 1 -p, \tilde{\bm{\mu}}_k, \tilde{\bm{\Sigma}}_k)\), where \(\tilde{\bm{\mu}}_k = (N_k\bar{\bm{y}}_k + \bm{\eta}_k) / (N_k + 1)\) and
{\footnotesize
\begin{equation}
\tilde{\bm{\Sigma}}_k = \frac{N_k + 2}{(N_k + 1)(N_k + \nu_k + 1 - p )}\left(\bm{\Psi}_k + (\bm{Y}_{N_k \times p} - 1_{N_k}\bm{\eta}_k^{\top})^{\top}\left(\bm{I}_{N_k} - \frac{1}{N_k + 1}\bm{J}_{N_k}\right)(\bm{Y}_{N_k \times p} - 1_{N_k}\bm{\eta}_k^{\top})\right).
\end{equation}
}
Additionally, \(\bar{\bm{y}}_k = \bm{Y}_{N_k \times p}^{\top} \bm{1}_{N_k} / N_k \). 

For the case when a new observation \(\tilde{\bm{y}}_{\tilde{p}}\) occurs, and \(\tilde{p} < p\) the properties of a multivariate t-distribution \citep{kotz2004multivariate} are used to find the marginal distribution of this shorter vector.  In short, the marginal distribution of \(\tilde{\bm{y}}_{\tilde{p} < p}\) has the same degrees of freedom as the distribution of \(\tilde{\bm{y}}_{\tilde{p} = p}\), but the elements of the mean vector and scale matrix corresponding to the missing elements of \(\tilde{\bm{y}}_{p}\) are removed.

\subsubsection{\(p(\tilde{z}_k = 1|\bm{Y}_{N \times p}, \bm{\alpha})\)}\label{sec:predictive-a-priori-y}

This distribution can be thought of as the probability that one of the categories will occur, before \(\tilde{\bm{y}}_{\tilde{p}}\) is observed.  In simple terms, it is the fraction of observations in the \(k^{\mathrm{th}}\) category in the training data, modified by the prior parameter \(\bm{\alpha}\).  In the event that the training data is not an accurate reflection of the frequency of future events, choosing \(p(\tilde{z}_k = 1|\bm{\alpha})\), uninformed by the training data may be preferable.  An option in such a scenario is \(p(\tilde{z}_k = 1| \alpha_1 = \dots = \alpha_K) = 1/K \), but any subjective choice under the constraint \(\sum_{i = 1}^K p(\tilde{z}_k =1 |\bm{\alpha}) = 1\) is valid.  Here we focus on the result when conditioning on the training data.

The vector of weights \(\bm{\pi}\) is present in the likelihood shown in Appendix \ref{sec:normal-likelihood}. However, from Equation (\ref{eq:likelihood}), the joint posterior of the parameters can be written as
\begin{equation}
p(\bm{\mu}, \bm{\Sigma}, \bm{\pi}|\bm{Y}_{N \times p}, \bm{\eta}, \bm{\Psi}, \bm{\nu}, \bm{\alpha}) = p(\bm{\pi}|\bm{Y}_{N \times p}, \bm{\alpha})p(\bm{\mu}, \bm{\Sigma}|\bm{Y}_{N \times p}, \bm{\eta}, \bm{\Psi}, \bm{\nu}),
\end{equation}
meaning \(\bm{\pi}\) is conditionally independent of the other parameters.  Subsequently, we note \(p(\bm{\pi}|\bm{N}) = p(\bm{\pi}|\bm{Y}_{N \times p})\), where \(\bm{N} = [N_1 , \dots , N_K]\), because the number of training observations in each event category is the only data that informs the posterior \(p(\bm{\pi}|\bm{N}, \bm{\alpha})\), and therefore \(p(\tilde{\bm{z}}_K| \bm{N}, \bm{\alpha})\).  The posterior \(p(\bm{\pi}|\bm{N}, \bm{\alpha})\) can be recognized as the Dirichlet distribution \citep{ng2011dirichlet} \(\mathrm{Dir}(N_1 + \alpha_1, \dots, N_K + \alpha_K)\).  A draw of the random variable \(\bm{\pi}\) is the probability of each category, based on counts of each category from the likelihood and the choice of the vector \(\bm{\alpha}\).

The variable encoding the \emph{predictive category} is \(\tilde{\bm{z}}_K\).  The probability distribution \(p(\tilde{\bm{z}}_K|\bm{\pi})\) is a categorical distribution with category probabilities of \(\bm{\pi}\).  With the form of \(p(\tilde{\bm{z}}_K|\bm{\pi})\) and \(p(\bm{\pi}|\bm{N}, \bm{\alpha})\) specified, \(p(\tilde{\bm{z}}_K|\bm{Y}_{N\times p}, \bm{\alpha})\) is the result of integration
 \begin{equation}
 p(\tilde{\bm{z}}_K| \bm{N}, \bm{\alpha}) = \int_{[0,1]^K} p(\tilde{\bm{z}}_K| \bm{N}, \bm{\pi}, \bm{\alpha})p(\bm{\pi}|\bm{N}, \bm{\alpha})d\bm{\pi}.
 \end{equation}
The analytical form of the vector of probabilities \(p(\tilde{\bm{z}}_K|\bm{N}, \bm{\alpha})\), and thereby a specific \(p(\tilde{z}_k = 1|\bm{N}, \bm{\alpha})\), is found as
\begin{align}
 p(\tilde{\bm{z}}_K| \bm{N}, \bm{\alpha}) &= \int_{[0,1]^K} \prod_{k =1}^K \left[\pi^{\tilde{z}_k}_k\right] \Gamma\left(\sum_{k = 1}^K (N_k + \alpha_k)\right)\prod_{k =1}^K\frac{\pi_k^{N_k + \alpha_k - 1}}{\Gamma(N_k + \alpha_k)}d\bm{\pi} \notag \\
 &= \frac{\Gamma\left(\sum_{k = 1}^K (N_k + \alpha_k)\right)}{\prod_{k = 1}^K \Gamma(N_k + \alpha_k)}\int_{[0,1]^K} \prod_{k =1}^K \pi^{N_k + \alpha_k + \tilde{z}_k - 1}_k d\bm{\pi} \notag \\
 &= \frac{\Gamma\left(\sum_{k = 1}^K (N_k + \alpha_k)\right)}{\prod_{k = 1}^K \Gamma(N_k + \alpha_k)}\frac{\prod_{k = 1}^K \Gamma(N_k + \alpha_k + \tilde{z}_k)}{\Gamma\left(\sum_{k = 1}^K (N_k + \alpha_k + \tilde{z}_k)\right)} \notag \\
 &\tilde{z}_k = 1 \implies \notag\\
 p(\tilde{z}_k &= 1|\bm{N}, \bm{\alpha}) = \frac{N_k + \alpha_k}{\sum_{k = 1}^K (N_k + \alpha_k)},
\end{align}
where \(\Gamma(\cdot)\) is the gamma function which has the property \(\Gamma(x + 1) = x\Gamma(x)\) utilized in the simplification.  This result is a simplification of the Dirichlet-multinomial distribution \citep{ng2011dirichlet} for a single predictive draw, where a single \(\tilde{z}_k = 1\) for an arbitrary \(k\).

\subsubsection{\(p(\tilde{\bm{y}}_{\tilde{p}}|\bm{Y}_{N\times p}, \bm{\eta}, \bm{\Psi}, \bm{\nu}, \bm{\alpha})\)}\label{sec:marginal-predictive}

This density is a marginalization over all \(K\) categories to produce the total predictive density of observing \(\tilde{\bm{y}}_{\tilde{p}}\) from those categories.  Using the law of total probability, the result is simply
\begin{equation}
p(\tilde{\bm{y}}_{\tilde{p}}|\bm{Y}_{N \times p}, \bm{\eta}, \bm{\Psi}, \bm{\nu}, \bm{\alpha}) = \sum_{k = 1}^K p(\tilde{\bm{y}}_{\tilde{p}}|\tilde{z}_k = 1, \bm{Y}_{N_k \times p}, \bm{\eta}_k, \bm{\Psi}_k, \nu_k)p(\tilde{z}_k = 1| \bm{Y}_{N \times p}, \bm{\alpha}).
\end{equation}


\section{Multivariate and Matrix t-Distributions}\label{sec:matrix-t}

The properties of the multivariate t-distribution are explored in great detail in \citet{kotz2004multivariate} with the relation to the matrix variate t-distribution summarized in \citet{gupta2018matrix}.  Herein the main parameterization used in \citet{kotz2004multivariate} is set as the standard, where a random vector \(\bm{t}_{p \times 1}\) is said to be multivariate t-distributed, \(t_{\nu}(\bm{\mu}, \bm{\Sigma})\), if the probability density function is given as
\begin{equation}
\frac{\Gamma((\nu + p)/2)}{(\pi \nu)^{\frac{p}{2}}\Gamma(\nu/2)}|\bm{\Sigma}|^{-\frac{1}{2}}\left[1 + \frac{1}{\nu}(\bm{t} - \bm{\mu})^{\top}\bm{\Sigma}^{-1}(\bm{t} - \bm{\mu})\right]^{-\frac{(\nu + p)}{2}},
\end{equation}
where \(\bm{\mu}_{p \times 1}\) is a \(p\) dimensional scale vector, \(\nu\) is the degrees of freedom parameter, \(\bm{\Sigma}\) is a scale or correlation matrix, and \(\Gamma(x)\) is the gamma function of \(x\).  When \(\nu > 1, \ \mathbb{E}[\bm{t}] = \bm{\mu}\), and when \(\nu > 2, \mathbb{C}\mathrm{ov}(\bm{t}) = \nu \bm{\Sigma} / (\nu - 2)\), otherwise these moments are undefined.

We utilize the form of the matrix variate t-distribution density in \citet{gupta2018matrix}, with some slight changes in notation from this reference.  The random variate \(\bm{T}_{N \times p} \) is matrix variate t-distributed \(T_{N \times p}(m, \bm{M}, \bm{\Sigma}, \bm{\Omega}) \) with a probability density function
\begin{equation}
\frac{\Gamma_p[\frac{1}{2}(N + m + p -1)]}{\pi^{\frac{1}{2}Np}\Gamma_p[\frac{1}{2}(m + N -1)]}|\bm{\Sigma}|^{-\frac{p}{2}}|\bm{\Omega}|^{-\frac{N}{2}}|\bm{I}_N + \bm{\Sigma}^{-1}(\bm{T} - \bm{M})\bm{\Omega}^{-1}(\bm{T}- \bm{M})^{\top}|^{-\frac{(N + m + p -1)}{2}},
\end{equation}
where \(m\) is the degrees of freedom, \(\bm{M}\) is a \(N \times p\) location matrix, and \(\bm{\Omega}_{p \times p}, \bm{\Sigma}_{N \times N}\) are positive definite symmetric scale matrices.  Analogous to the multivariate t-distribution, \(\mathbb{V}\mathrm{ar}(\mathrm{vec}(\bm{T})) = \bm{\Sigma}\otimes \bm{\Omega}/(m - 2)\), but is undefined when \(m \leq 2\), and \(\mathrm{vec}(\bm{T})\) is the vectorization operator.  Additionally, note that \(T_{1 \times p}(\nu, \bm{\mu}^{\top}, \sigma, \bm{\Omega}) = t_\nu(\bm{\mu}, \sigma \bm{\Omega} / \nu)\).

The following theorem is taken from \citet{gupta2018matrix} and utilized in Algorithm \ref{alg:training}.  Let \(\bm{A}_{N \times N}\) and \(\bm{B}_{p \times p}\) be non-singular square matrices, and let \(\bm{T}_{N\times p} \sim T_{N \times p}(m, \bm{M}, \bm{\Sigma}, \bm{\Omega})\).  Then, \(\bm{A}_{N \times N}\bm{T}_{N\times p}\bm{B}_{p \times p} \sim T_{N \times p}(m, \bm{A}\bm{M}\bm{B}, \bm{A}\bm{\Sigma}\bm{A}^{\top}, \bm{B}\bm{\Omega}\bm{B}^{\top})\).

Properties of the marginal and conditional distributions are reproduced here from \citet{gupta2018matrix}.  These properties are necessary for evaluating the predictive density of \(\tilde{\bm{y}}_{\tilde{p}}\) for \(\tilde{p} < p\) and obtaining conditional draws of \(\bm{Y}^{-}\) given \(\bm{Y}^{+}\) in Algorithm \ref{alg:training}.  For the random variable \(\bm{T}_{N \times p} \sim T_{N \times p}(m, \bm{M}_{N\times p}, \bm{\Sigma}_{N\times N}, \bm{\Omega}_{p \times p})\), arbitrarily partition the matrices as follows:

\begin{equation}
\begin{split}
\bm{T}_{N \times p} = \raisebox{-0.3\baselineskip}{$\begin{blockarray}{cc}
\begin{block}{[c]c}
\bm{T}_{1r} \bigstrut[t]& N_1 \\
\bm{T}_{2r} \bigstrut[b]& N_2 \\
\end{block}
\end{blockarray}$} = \raisebox{-0.666666\baselineskip}{$\begin{blockarray}{cc}
\begin{block}{[cc]}
\bm{T}_{1c} & \bm{T}_{2c}\bigstrut[t]\bigstrut[b]\\
\end{block}
p_1 & p_2
\end{blockarray}$},  &\  \bm{M}_{N \times p} = \raisebox{-0.3\baselineskip}{$\begin{blockarray}{cc}
\begin{block}{[c]c}
\bm{M}_{1r} \bigstrut[t]& N_1 \\
\bm{M}_{2r} \bigstrut[b]& N_2 \\
\end{block}
\end{blockarray}$} = \raisebox{-0.6666666\baselineskip}{$\begin{blockarray}{cc}
\begin{block}{[cc]}
\bm{M}_{1c} & \bm{M}_{2c}\bigstrut[t]\bigstrut[b]\\
\end{block}
p_1 & p_2
\end{blockarray}$}\\
\bm{\Sigma}_{N \times N} = \raisebox{-0.6666666\baselineskip}{$\begin{blockarray}{ccc}
\begin{block}{[cc]c}
\bm{\Sigma}_{11} & \bm{\Sigma}_{12} \bigstrut[t] & N_1\\
\bm{\Sigma}_{21} & \bm{\Sigma}_{22} \bigstrut[b] & N_2\\
\end{block}
N_1 & N_2 & \\
\end{blockarray}$}, & \  \bm{\Omega}_{p \times p} = \raisebox{-0.6666666\baselineskip}{$\begin{blockarray}{ccc}
\begin{block}{[cc]c}
\bm{\Omega}_{11} & \bm{\Omega}_{12} \bigstrut[t] & p_1\\
\bm{\Omega}_{21} & \bm{\Omega}_{22} \bigstrut[b] & p_2\\
\end{block}
p_1 & p_2 & \\
\end{blockarray}$}
\end{split}
\end{equation}
Then the following marginal and conditional distributions of the partitions of \(\bm{T}\) are distributed as
\begin{align}
\bm{T}_{2r} &\sim T_{N_2, p}(m, \bm{M}_{2r}, \bm{\Sigma}_{22}, \bm{\Omega}_{p\times p})  \\
\begin{split}
\bm{T}_{1r}|\bm{T}_{2r} &\sim T_{N_1, p}(m + N_2, \bm{M}_{1r} + \bm{\Sigma}_{12}\bm{\Sigma}^{-1}_{22}(\bm{T}_{2r} - \bm{M}_{2r}), \\
& \phantom{XXXXXXX} \bm{\Sigma}_{11} - \bm{\Sigma}_{12}\bm{\Sigma}_{22}^{-1}\bm{\Sigma}_{21} \bm{\Omega}_{p\times p} + (\bm{T}_{2r} -\bm{M}_{2r})^{\top}\bm{\Sigma}_{22}^{-1}(\bm{T}_{2r}-\bm{M}_{2r}))
\end{split}\\
& \phantom{XX} \notag \\
\bm{T}_{2c} &\sim T_{N, p_2}(m, \bm{M}_{2c}, \bm{\Sigma}_{N\times N}, \bm{\Omega}_{22})  \\
\begin{split}
\bm{T}_{1c}|\bm{T}_{2c} &\sim T_{N, p_1}(m + p_2, \bm{M}_{1c} + (\bm{T}_{2c} - \bm{M}_{2c})\bm{\Omega}_{22}^{-1}\bm{\Omega}_{21}, \\
& \phantom{XXXXXXX} \bm{\Sigma}_{N\times N} + (\bm{T}_{2c} -\bm{M}_{2c})\bm{\Omega}_{22}^{-1}(\bm{T}_{2c}-\bm{M}_{2c})^{\top}, \bm{\Omega}_{11} - \bm{\Omega}_{12}\bm{\Omega}_{22}^{-1}\bm{\Omega}_{21})
\end{split}
\end{align}
The last property of the multivariate t-distribution highly relevant to this application is a correction from \citet{kotz2004multivariate}, useful for the Bayesian typicality index detailed in Appendix \ref{sec:bayes-typicality-index}.  Let the \(\tilde{p}\) dimensional random vector \(\tilde{\bm{y}}_{\tilde{p}} \sim t_{\tilde{\nu}}(\tilde{\bm{\mu}}, \tilde{\bm{\Sigma}})\), then the quadratic form \(Z = (\tilde{\bm{y}}_{\tilde{p}} - \tilde{\bm{\mu}})^{\top}\tilde{\bm{\Sigma}}^{-1}(\tilde{\bm{y}}_{\tilde{p}} - \tilde{\bm{\mu}})/\tilde{p}\), conditioned on \(\tilde{\bm{\mu}}\) is \(F\) distributed as \({\mathbb{F}}(\tilde{p}, \tilde{v})\).


\section{Decision Theory}

\subsection{Binary Decisions}\label{sec:binary-decision}

When a binary decision needs to be made, some simplifications can be utilized to reduce computation time relative to categorization, especially for large \(K\).  One simplification is the result of
\begin{align}
\sum_{k = 2}^Kp(\tilde{z}_k = 1|\tilde{\bm{y}}_{\tilde{p}}, \bm{Y}_{N \times p}, \bm{\eta}, \bm{\Psi}, \bm{\nu}, \bm{\alpha}) &=  p(\tilde{z}_1 \neq 1|\tilde{\bm{y}}_{\tilde{p}}, \bm{Y}_{N \times p}, \bm{\eta}, \bm{\Psi}, \bm{\nu}, \bm{\alpha}) \notag \\
&= 1- p(\tilde{z}_1 = 1|\tilde{\bm{y}}_{\tilde{p}}, \bm{Y}_{N \times p}, \bm{\eta}, \bm{\Psi}, \bm{\nu}, \bm{\alpha}).
\end{align}
Therefore, under the binary action space of deciding if \(\tilde{\bm{y}}_{\tilde{p}}\) belongs in the first out of \(K\) categories, we only need to know \(p(\tilde{z}_1 = 1|\tilde{\bm{y}}_{\tilde{p}}, \bm{Y}_{N \times p}, \bm{\eta}, \bm{\Psi}, \bm{\nu}, \bm{\alpha})\).  Then, given the loss matrix \(\bm{C}_{2 \times 2}\) 
\begin{equation}
\bm{C}_{2 \times 2} = 
\begin{blockarray}{cccl}
 & a_1  & a_2 & \\
\begin{block}{c[cc]l}
\tilde{z}_1 = 1 & C_{1,1}  & C_{1,2}\bigstrut[t] & \\
\tilde{z}_1 \neq 1 & C_{2,1} & C_{2,2}\bigstrut[b] &, \\
\end{block}
\end{blockarray}
\end{equation}
the posterior expected loss becomes 
\begin{equation}
\begin{split}
\mathbb{E}[L(a_1, \tilde{z}_1)] &=  \mathbb{E}[\tilde{z}_k = 1|\tilde{\bm{y}}_{\tilde{p}}, \bm{Y}_{N \times p}, \bm{\theta}]C_{1,1} + (1- \mathbb{E}[\tilde{z}_k = 1|\tilde{\bm{y}}_{\tilde{p}}, \bm{Y}_{N \times p}, \bm{\theta}])C_{2,1}\\
\mathbb{E}[L(a_2, \tilde{z}_1)] &=   \mathbb{E}[\tilde{z}_k = 1|\tilde{\bm{y}}_{\tilde{p}}, \bm{Y}_{N \times p}, \bm{\theta}]C_{1,2} + (1- \mathbb{E}[\tilde{z}_k = 1|\tilde{\bm{y}}_{\tilde{p}}, \bm{Y}_{N \times p}, \bm{\theta}])C_{2,2},
\end{split}
\end{equation}
where \(\bm{\theta} = [\bm{\eta}, \bm{\Psi}, \bm{\nu}, \bm{\alpha}]\).  The posterior expected loss of \(a_1\) can be simplified as
\begin{equation}
\mathbb{E}[L(a_1, \tilde{z}_1)] =  (C_{1,1} -C_{2,1})\mathbb{E}[\tilde{z}_k = 1|\tilde{\bm{y}}_{\tilde{p}}, \bm{Y}_{N \times p}, \bm{\theta}] + C_{2,1},
\end{equation}
and similarly the posterior expected loss of \(a_2\) can be simplified as
\begin{equation}
\mathbb{E}[L(a_2, \tilde{z}_1)] =   (C_{1,2} - C_{2,2})\mathbb{E}[\tilde{z}_k = 1|\tilde{\bm{y}}_{\tilde{p}}, \bm{Y}_{N \times p}, \bm{\theta}] + C_{2,2}.
\end{equation}
Action \(a_1\) is chosen when \(\mathbb{E}[L(a_1, \tilde{z}_1)] <\mathbb{E}[L(a_2, \tilde{z}_1)] \).  Equivalently
\begin{equation}
(C_{1,1} -C_{2,1})\mathbb{E}[\tilde{z}_k = 1|\tilde{\bm{y}}_{\tilde{p}}, \bm{Y}_{N \times p}, \bm{\theta}] + C_{2,1} <  (C_{1,2} - C_{2,2})\mathbb{E}[\tilde{z}_k = 1|\tilde{\bm{y}}_{\tilde{p}}, \bm{Y}_{N \times p}, \bm{\theta}] + C_{2,2}
\end{equation}
Rearranging terms allows for the simplifications
\begin{equation}
\delta(a_1) = 1 \ \mathrm{if}
\begin{cases}
\mathbb{E}[\tilde{z}_k = 1|\tilde{\bm{y}}_{\tilde{p}}, \bm{Y}_{N \times p}, \bm{\theta}] > \frac{C_{2,1} - C_{2,2}}{(C_{1,2} - C_{2,2} +C_{2,1} - C_{1,1})}  & \mathrm{if} \ (C_{1,2} - C_{2,2} +C_{2,1} - C_{1,1}) > 0\\
\mathbb{E}[\tilde{z}_k = 1|\tilde{\bm{y}}_{\tilde{p}}, \bm{Y}_{N \times p}, \bm{\theta}] < \frac{C_{2,1} - C_{2,2}}{(C_{1,2} - C_{2,2} +C_{2,1} - C_{1,1})}  & \mathrm{if} \ (C_{1,2} - C_{2,2} +C_{2,1} - C_{1,1}) < 0\\
\mathrm{undefined} & \mathrm{if} \ (C_{1,2} - C_{2,2} +C_{2,1} - C_{1,1}) = 0
\end{cases}
\end{equation}
Action \(a_1\) has the minimum posterior expected loss, indicated by \(\delta(a_1) = 1\), when the relevant inequality under \(\delta ( a_1  ) \)  is satisfied.  Notably, under 0-1 loss, when \(C_{1,1} = C_{2,2} = 0\) and \(C_{2,1} = C_{1,2} = 1\), \(a_1\) provides minimum expected loss when \(\mathbb{E}[\tilde{z}_k = 1|\tilde{\bm{y}}_{\tilde{p}}, \bm{Y}_{N \times p}, \bm{\theta}] > \frac{1}{2}\).  Binary decision making under 0-1 loss is therefore simple to evaluate, as the only quantity required is \(p(\tilde{z}_1 = 1|\tilde{\bm{y}}_{\tilde{p}}, \bm{Y}_{N \times p}, \bm{\eta}, \bm{\Psi}, \bm{\nu}, \bm{\alpha})\).

\subsection{Posterior Expected Loss with Missing Training Data}\label{sec:expected-loss-appendix}

The notation \(p(\tilde{z}_k = 1|\tilde{\bm{y}}_{\tilde{p}}, \bm{Y}_{N \times p}, \bm{\eta}, \bm{\Psi}, \bm{\nu}, \bm{\alpha})\) has been used to express the general case of the probability of \(\tilde{z}_k = 1\) given all the data and prior parameters.  However, in application, we expect that there are some missing entries in \(\bm{Y}_{N\times p}\), where with generality \(\bm{Y}^{+}\) indicates the observed entries, and \(\bm{Y}^{-}\) indicates the missing entries.  

In Algorithm \ref{alg:training} and Appendix \ref{sec:missing-data-conditionals} details are provided for using MCMC to generate joint samples from \(p(\bm{Y}^{-}_k|\bm{Y}_k^{+}, \bm{\eta}_k, \bm{\Psi}_k, \nu_k)\) for the \(k^{\mathrm{th}}\) event category.  Decisions are not made by directly using the imputed \(\bm{Y}^{-}\) and instead made using only \(\bm{Y}^{+}\) and the prior hyperparameters.  For decisions where there are missing data entries, we therefore need each
\begin{align}
&p(\tilde{z}_k = 1|\tilde{\bm{y}}_{\tilde{p}}, \bm{Y}^{+}, \bm{\eta}, \bm{\Psi}, \bm{\nu}, \bm{\alpha}) \notag \\
&=\int \dots \int p(\tilde{z}_k = 1,\bm{Y}^{-}_1, \dots, \bm{Y}^{-}_K|\tilde{\bm{y}}_{\tilde{p}}, \bm{Y}^{+},\bm{\eta}, \bm{\Psi}, \bm{\nu}, \bm{\alpha})d\bm{Y}_1^{-} \dots d\bm{Y}^{-}_K \notag \\
&= \int \dots \int \frac{p(\tilde{\bm{y}}_{\tilde{p}}|\tilde{z}_k = 1, \bm{Y}^+_k, \bm{Y}^{-}_k, \bm{\theta})p(\tilde{z}_k = 1|\bm{N}, \bm{\theta})}{p(\tilde{\bm{y}}_{\tilde{p}}|\bm{Y}^{+}, \bm{\theta})}p(\bm{Y}^{-}_1, \dots, \bm{Y}^{-}_K|\bm{Y}^{+}, \bm{\theta})d\bm{Y}_1^{-} \dots d\bm{Y}^{-}_K,
\end{align}
where \(\bm{\theta} = [\bm{\eta}, \bm{\Psi}, \bm{\nu}, \bm{\alpha}] \) and \(\bm{N} = [N_1 , \dots , N_K]\), noting that
\begin{equation}
p(\bm{Y}^{-}_1, \dots, \bm{Y}^{-}_K|\bm{Y}^{+}, \bm{\theta}) = \prod_{k=1}^K p(\bm{Y}^{-}_k|\bm{Y}^{+}_k, \bm{\theta}_k)
\end{equation}
for \(\bm{\theta}_k = [\bm{\eta}_k, \bm{\Psi}_k, \nu_k] \).

\subsection{Bayesian Typicality Index}\label{sec:bayes-typicality-index}

In calculating the typicality index \citep{mclachlan2005discriminant} using B-ECM, one has to account for the fact that \(p(\tilde{\bm{y}}_{\tilde{p}}|\tilde{z}_k = 1, \bm{Y}_{N_k \times p}, \bm{\eta}_k, \bm{\Psi}_k, \nu_k)\) is a multivariate t-distribution, instead of a multivariate normal distribution as in classical ECM.  We know that for the density \(p(\tilde{\bm{y}}_{\tilde{p}}|\tilde{z}_k = 1, \bm{Y}_{N_k \times p}, \bm{\eta}_k, \bm{\Psi}_k, \nu_k)\), the quadratic form \((\tilde{\bm{y}}_{\tilde{p}} - \tilde{\bm{\mu}}_k)^{\top}\tilde{\bm{\Sigma}}_k^{-1}(\tilde{\bm{y}}_{\tilde{p}} - \tilde{\bm{\mu}}_k)/\tilde{p}\) conditioned on \(\tilde{\bm{\mu}}_k\) is \(F\) distributed as \(\mathbb{F}(\tilde{p}, \tilde{\nu}_k)\)\citep{kotz2004multivariate}.  Let \(Z =(\tilde{\bm{y}}_{\tilde{p}} - \tilde{\bm{\mu}}_k)^{\top}\tilde{\bm{\Sigma}}_k^{-1}(\tilde{\bm{y}}_{\tilde{p}} - \tilde{\bm{\mu}}_k)/\tilde{p}\). Then the upper tail probability under the null F-distribution is given by
\begin{equation}\label{eq:pval}
\begin{split}
\int_Z^{\infty} f(x) dx &= F(\infty) - F(Z)\\
& = 1- F(Z),
\end{split}
\end{equation}
where \(f(\cdot)\) is the \(\mathbb{F}(\tilde{p},\tilde{\nu}_k)\) density function and \(F(\cdot)\) is the \(\mathbb{F}(\tilde{p},\tilde{\nu}_k)\) cumulative distribution function.  \(F(Z)\) is the regularized incomplete beta function in the form of 
\begin{equation}\label{eq:reg-incomplete-beta}
I_{\tilde{b}}\left(\frac{\tilde{p}}{2}, \frac{\tilde{\nu}_k}{2}\right) = \frac{\int_{0}^{\tilde{b}} x^{\frac{\tilde{p}}{2} - 1}(1-x)^{\frac{\tilde{\nu}_k}{2} - 1} dx}{\int_{0}^{1} x^{\frac{\tilde{p}}{2} - 1}(1-x)^{\frac{\tilde{\nu}_k}{2} - 1} dx},
\end{equation}
where \(\tilde{b} = (\tilde{\bm{y}}_{\tilde{p}} - \tilde{\bm{\mu}}_k)^{\top}\tilde{\bm{\Sigma}}_k^{-1}(\tilde{\bm{y}}_{\tilde{p}} - \tilde{\bm{\mu}}_k)/((\tilde{\bm{y}}_{\tilde{p}} - \tilde{\bm{\mu}}_k)^{\top}\tilde{\bm{\Sigma}}_k^{-1}(\tilde{\bm{y}}_{\tilde{p}} - \tilde{\bm{\mu}}_k) + \tilde{\nu}_k)\).

With \eqref{eq:reg-incomplete-beta}, we can calculate the p-value \eqref{eq:pval} related to \(\tilde{\bm{y}}_{\tilde{p}}\) given \(\bm{Y}_{N_k \times p}\) and the prior hyperparameters \(\bm{\eta}_k, \bm{\Psi}_k\), and \(\nu_k\).  When all training data is complete, Equation (\ref{eq:reg-incomplete-beta}) evaluates to a point that can be used to judge the typicality of \(\tilde{\bm{y}}_{\tilde{p}}\) in event category \(k\).  However, when \(\bm{Y}_{N_k \times p}\) contains partial observations, there is uncertainty related to \(\bm{Y}^{-}_k\), and therefore a distribution of the p-values.  Samples from the distribution of p-values are readily obtained by plugging the MC samples of \(\bm{Y}^{-}_k\) into Equation (\ref{eq:reg-incomplete-beta}).  Let \(\Phi\) indicate a random p-value from the distribution of p-values \(p(\Phi)\).  

Next, we must decide to either reject or fail to reject \(\tilde{\bm{y}}_{\tilde{p}}\) as being typical of event category \(k\), given the user specified significance level \(\tilde{\alpha}\).  When no data in \(\bm{Y}_{N_k \times p}\) is missing, the next steps are the same as classical typicality and hypothesis testing.  When there is a distribution on \(\Phi\), we have to decide what action to take for hypothesis testing, given \(p(\Phi)\) and a loss function.  Here we examine the difference between choosing between the actions  \(a_1\) and \(a_2\), rejection and failure to reject respectively, when the decision is made using minimum expected loss and simply testing if \(\mathbb{E}[\Phi] < \tilde{\alpha}\).

First, observe that for 0,1 loss matrix
\begin{equation}\label{eq:typicality-mat-1}
\bm{C}^{\Phi}_{2 \times 2} = 
\begin{blockarray}{cccl}
 & a_1  & a_2 & \\
\begin{block}{c[cc]l}
\Phi < \tilde{\alpha} & 0  & 1\bigstrut[t] & \\
\Phi \geq \tilde{\alpha} & 1 & 0\bigstrut[b] &, \\
\end{block}
\end{blockarray}
\end{equation}
the loss function of action \(a_1\) is 
\begin{equation}
\begin{split}
L(\Phi, a_1) &= I[\Phi \geq \tilde{\alpha}]\\
&= I[1-F(Z) \geq \tilde{\alpha}], 
\end{split}
\end{equation}
where \(I[\cdot]\) is the indicator function equal to 1 when the statement inside the brackets is true.  The expected loss, equivalent to the expectation of a function of a random variable, is then
\begin{equation}
\begin{split}
\mathbb{E}\left[L(\Phi, a_1)\right] &=  \int_{0}^{1}I[\Phi \geq \tilde{\alpha}] p(\Phi) d\Phi\\
&= 0\times \int_{0}^{\tilde{\alpha}} p(\Phi)d\Phi + 1 \times \int_{\tilde{\alpha}}^{1} p(\Phi) d\Phi\\
&= Pr(\Phi \geq \tilde{\alpha}).
\end{split}
\end{equation}
Similarly, the expected loss of \(a_2\) is
\begin{equation}
\begin{split}
\mathbb{E}\left[L(\Phi, a_2)\right] &=  \int_{0}^{1}I[\Phi < \tilde{\alpha}] p(\Phi) d\Phi\\
&= 1\times \int_{0}^{\tilde{\alpha}} p(\Phi)d\Phi + 0 \times \int_{\tilde{\alpha}}^{1} p(\Phi) d\Phi\\
&= Pr(\Phi < \tilde{\alpha}).
\end{split}
\end{equation}
This implies that action \(a_1\), rejection, is the minimum loss action when \(Pr(\Phi \geq \tilde{\alpha}) < Pr(\Phi < \tilde{\alpha})\).  Which is equivalent to the condition \(\frac{1}{2} < Pr(\Phi < \tilde{\alpha})\).  \(Pr(\Phi < \tilde{\alpha})\) is obtained from the CDF function of \(p(\Phi)\) evaluated at \(\tilde{\alpha}\), subsequently noted as \(F^{\Phi}(\tilde{\alpha})\).

One may be tempted to evaluate \(\mathbb{E}[\Phi] < \tilde{\alpha}\) as the decision criterion for rejection, using the loss matrix
\begin{equation}\label{eq:typicality-mat-2}
\bm{C}^{\Phi}_{2 \times 2} = 
\begin{blockarray}{cccl}
 & a_1  & a_2 & \\
\begin{block}{c[cc]l}
\mathbb{E}[\Phi] < \tilde{\alpha} & 0  & 1\bigstrut[t] & \\
\mathbb{E}[\Phi] \geq \tilde{\alpha} & 1 & 0\bigstrut[b] & . \\
\end{block}
\end{blockarray}
\end{equation}
However, this second set of decision criteria will not always lead to the same action as in \eqref{eq:typicality-mat-1}.

\begin{lemma}
The loss functions stipulated by \eqref{eq:typicality-mat-1} and \eqref{eq:typicality-mat-2} do not produce equivalent minimum expected loss for the action space, where \(\Phi\) is distributed as \(p(\Phi)\).
\end{lemma}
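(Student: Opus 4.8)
The plan is to reduce each loss matrix to the decision rule it induces and then demonstrate that the two rules can disagree by exhibiting a single distribution $p(\Phi)$. From the derivation immediately preceding the statement, the loss matrix in Equation~\eqref{eq:typicality-mat-1} gives $\mathbb{E}[L(\Phi, a_1)] = \Pr(\Phi \geq \tilde{\alpha})$ and $\mathbb{E}[L(\Phi, a_2)] = \Pr(\Phi < \tilde{\alpha})$, so rejection ($a_1$) is the minimum expected loss action precisely when $F^{\Phi}(\tilde{\alpha}) = \Pr(\Phi < \tilde{\alpha}) > \tfrac{1}{2}$. This is a condition on a \emph{median-type quantile} of $p(\Phi)$ relative to $\tilde{\alpha}$: it rejects exactly when the median of $\Phi$ falls below $\tilde{\alpha}$.

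Next I would note that the loss matrix in Equation~\eqref{eq:typicality-mat-2} is structurally different, because its state rows are indexed by the \emph{deterministic} quantity $\mathbb{E}[\Phi]$ rather than by the random $\Phi$. Consequently there is nothing to average over: the active row is already determined, so the minimum loss action is simply $a_1$ when $\mathbb{E}[\Phi] < \tilde{\alpha}$ and $a_2$ otherwise. The second criterion therefore compares the \emph{mean} of $p(\Phi)$ to $\tilde{\alpha}$, whereas the first compares a quantile to $\tilde{\alpha}$. Equivalence of the two rules would require the sign of $F^{\Phi}(\tilde{\alpha}) - \tfrac{1}{2}$ to agree with the sign of $\tilde{\alpha} - \mathbb{E}[\Phi]$ for every admissible $p(\Phi)$, and I would break this by forcing the median and the mean onto opposite sides of $\tilde{\alpha}$.

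Concretely, I would fix $\tilde{\alpha}$ (say the value $0.05$ used in the experiments) and take a two-point distribution on $[0,1]$ that assigns mass $0.6$ to a point $\varepsilon < \tilde{\alpha}$ and mass $0.4$ to a point $1-\varepsilon$ near one. Then $\Pr(\Phi < \tilde{\alpha}) = 0.6 > \tfrac{1}{2}$, so the rule from Equation~\eqref{eq:typicality-mat-1} rejects, while $\mathbb{E}[\Phi] = 0.4 + 0.2\varepsilon > \tilde{\alpha}$, so the rule from Equation~\eqref{eq:typicality-mat-2} fails to reject. Because the two matrices return different actions for this $p(\Phi)$, they are not equivalent; the same effect can be produced with a smooth, heavily right-skewed density (for instance a suitable mixture of beta densities) if a continuous example is preferred.

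The main obstacle is interpretive rather than computational: one must recognize that the right-hand loss matrix does not encode a genuine posterior expectation over the random $\Phi$, so its ``expected loss'' collapses to a plug-in comparison of the mean. Once this is seen, the inequivalence is an immediate consequence of the familiar gap between the mean and the median of a skewed distribution. I would close by remarking that such skewness is exactly what arises here, since $p(\Phi)$ is obtained by pushing Monte Carlo draws of $\bm{Y}_k^{-}$ through the strongly nonlinear regularized incomplete beta map of Equation~\eqref{eq:reg-incomplete-beta}; the counterexample therefore reflects a regime that genuinely occurs in the application rather than a contrived pathology.
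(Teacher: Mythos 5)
Your proposal is correct and follows essentially the same route as the paper: both reduce the two loss matrices to a median-versus-$\tilde{\alpha}$ rule and a mean-versus-$\tilde{\alpha}$ rule, then exhibit a single distribution whose mean and median fall on opposite sides of $\tilde{\alpha}$ (the paper uses $\mathrm{Beta}(2,5)$ with $\tilde{\alpha}=0.2687$, you use a two-point mass, noting a smooth variant is available). The only caveat is that the paper writes $p(\Phi)$ as a density, so your continuous (beta-mixture) fallback is the version that matches the stated setting exactly.
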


\begin{proof}
For the minimum expected loss actions to be consistent between \eqref{eq:typicality-mat-1} and \eqref{eq:typicality-mat-2}, there would be no density of \(p(\Phi)\), and no values of \(\tilde{\alpha}\) where the simultaneous conditions \(F^{\Phi}(\tilde{\alpha}) =  \int_0^{\tilde{\alpha}} p(\Phi) d\Phi > 0.5\) and  \(\mathbb{E}[\Phi] \ge \tilde{\alpha}\) occur.  Such conditions would mean \(a_1\) produces minimum loss for \eqref{eq:typicality-mat-1} and \(a_2\) produces minimum loss for \eqref{eq:typicality-mat-2}.

Under the following conditions we can observe this inconsistency of the loss functions.  Let \(p(\Phi) = \mathrm{Beta}(2,5)\), and \(\tilde{\alpha} = 0.2687\).  In this case, \(F^{\Phi}(\tilde{\alpha}) \approx 0.51 > 0.5\) and \(\mathbb{E}[\Phi] = 2/7 \approx 0.2857 > \tilde{\alpha}\).
\end{proof}
As a more general note; the definition of expectation for real valued random variables is \(\int_0^{\mathbb{E}[\Phi]} F(\Phi)d\Phi = \int_{\mathbb{E}[\Phi]}^1 (1-F(\Phi))d\Phi\), and not necessarily where \(F(\Phi) = 0.5\).  That is, the mean does not necessarily equal the median. 

We choose the first decision criterion, because this criterion is invariant to a one-to-one bijective transformation \(g(\Phi)\) of the random variable \(\Phi\).  As long as the \(F^{g(\Phi)}(\tilde{\alpha})\) can be evaluated, the loss function will be consistent. However, \(\mathbb{E}[g(\Phi)] < g(\tilde{\alpha})\) may be inconsistent with the condition \(\mathbb{E}[\Phi] < \tilde{\alpha}\).  

In this case, evaluating \(F^{\Phi}(\tilde{\alpha})\) is implemented using the empirical CDF of the \(T-B\) Monte Carlo samples
\begin{equation}
F^{\Phi}(\tilde{\alpha}) \approx \frac{\sum_{t = 1}^{T-B}I[\Phi^{(t)} < \tilde{\alpha}]}{T-B},
\end{equation}
which should work well for 0,1 loss where we are evaluating if \(F^{\Phi}(\tilde{\alpha}) < 0.5\), but may require large \(T-B\), or other techniques, for loss functions which require evaluation of the CDF \(F^{\Phi}(\tilde{\alpha}) \gtrapprox 0\) or  \(F^{\Phi}(\tilde{\alpha}) \lessapprox 1\).

\section{Monte Carlo Experiments}

\subsection{Synthetic Data}\label{sec:syn-data-appendix}

\begin{algorithm}
\caption{Data generation for a single Monte Carlo experiment using synthetic data.}\label{alg:syn-data}
\begin{algorithmic}[1]
\Require \(p, N^{\mathrm{train}}, N^{\mathrm{train}^{-}}, N^{\mathrm{test}}, o\)
\State \([\bm{\mu}^{\star}_1, \bm{\mu}^{\star}_2, \bm{\mu}^{\star}_3]\sim \mathcal{N}_{p,3}(\bm{0}_{p \times 3}, 0.25 \times \mathbb{I}_{p} \otimes \mathbb{I}_3 ) \)
\State \([N_1, N_2, N_3] \sim \mathrm{Multinomial}(N_{\mathrm{train}} + N_{\mathrm{train}^{-}} + N_{\mathrm{test}}, [\frac{1}{3}, \frac{1}{3}, \frac{1}{3}])\)
\State \(k^{\star} \sim \mathrm{Categorical}([\frac{1}{3}, \frac{1}{3}, \frac{1}{3}])\)
\For{\(k \in \{1, 2, 3\}\)}
    \State \(x \sim \mathrm{Bernoulli}(0.5)\)
    \State \(b \gets 2-x \)
    \State \(\bm{\Sigma}_k^{\star} \sim \mathcal{W}^{-1}(p + 4, \mathbb{I}_p)\)
    \If{\(b = 2\)}
        \State \(\mathrm{dim}(\bm{b}^1) \sim \mathrm{Categorical}(\frac{1}{p-1} \bm{1}_{p-1}^{\top})\)
        \State \(\mathrm{dim}(\bm{b}^2) \gets p - \mathrm{dim}(\bm{b}^1)\)
        \State \(\bm{b}^1 \sim \mathrm{sample} \ \mathrm{dim}(\bm{b}^1) \ \mathrm{elements \ from \ } \{1, \dots , p\}\) without replacement
        \State \(\bm{b}^2 \gets \{1 , \dots , p\} \notin \bm{b}^1\)
        \State \((\bm{\Sigma}_k^{\star})_{[\bm{b}^1, \bm{b}^2]} \gets 0\)
        \State \((\bm{\Sigma}_k^{\star})_{[\bm{b}^2, \bm{b}^1]} \gets 0\)
    \EndIf
    \State \(\bm{Y}_{N_k \times p} \sim \mathcal{N}_{N_k, p}(\bm{{1}}_{N_k}(\bm{\mu}^{\star}_k)^{\top}, \mathbb{I}_{N_k} \otimes \bm{\Sigma}_k^{\star})\)
    \State \(\bm{Y}_{N_k \times p} \gets \mathrm{logit}^{-1}(\bm{Y}_{N_k \times p})\)
\EndFor
\State \(\mathcal{S}^{\mathrm{all}} \equiv  \{1^{(1)} , \dots , N_1^{(1)}, 1^{(2)}, \dots , N_2^{(2)}, \dots , 1^{(3)}, \dots , N_3^{(3)} \}\)
\While{\(\mathrm{Any} \ \mathrm{dim}(\mathcal{S}^{\mathrm{train}}_{k} : k \in \{1,2,3\}) < 2\)}
    \State \(\mathcal{S}^{\mathrm{test}} \sim \mathrm{sample} \ N^{\mathrm{test}} \ \mathrm{elements \ from} \ \mathcal{S}^{\mathrm{all}} \) without replacement
    \State \(\mathcal{S}^{\mathrm{train}} \gets \mathcal{S}^{\mathrm{all}} \notin \mathcal{S}^{\mathrm{test}}\)
    \State \(\mathcal{S}^{\mathrm{train}^{-}} \sim \mathrm{sample} \ N^{\mathrm{train}^{-}} \ \mathrm{elements \ from \ } \mathcal{S}^{\mathrm{train}}\) without replacement
    \State \(\mathcal{S}^{\mathrm{train}} \gets \mathcal{S}^{\mathrm{all}} \notin \{\mathcal{S}^{\mathrm{test}}, \mathcal{S}^{\mathrm{train}^{-}}\}\)
\EndWhile
\State \(\bm{Y}_{N^{\mathrm{train}^{-}}_k \times p} \gets (\bm{Y}_{N_k \times p})_{[\mathcal{S}^{\mathrm{train}^{-}}_k, \cdot]}: \forall k \in \{1, 2, 3\}\)
\State \(\bm{Y}_{N^{\mathrm{train}^{-}} \times p} \gets \{\bm{Y}_{N_1^{\mathrm{train}^{-}} \times p}, \bm{Y}_{N_2^{\mathrm{train}^{-}}\times p}, \bm{Y}_{N_3^{\mathrm{train}^{-}}\times p}\}\)
\For{\(i \in \{ 1, \ldots, N^{\mathrm{train}^{-}}\}\)}
\State \(\mathcal{P}^{\star +}_i \sim \mathrm{Categorical}(\frac{1}{p}\bm{1}_p^{\top})\)
\State \(\mathcal{P}^{-}_i = \{ 1, \ldots, p \} - \mathcal{P}^{\star +}_i\)
\State \(\mathcal{P}^{\star -}_i \sim \mathrm{sample \ one \ element \ from \ } \mathcal{P}^{-}_i\)
\State \(\mathcal{P}^{-}_i = \{ 1, \ldots, p \} - \{\mathcal{P}^{\star +}_i, \mathcal{P}^{\star -}_i\}\)
\State \(\underline{\mathcal{P}}^{-}_i = \mathrm{sample \ } \lfloor{(p-2) \times o\rfloor} \mathrm{\ elements \ from \ } \mathcal{P}^{-}_i\) without replacement
\State \((\bm{Y}_{N^{\mathrm{train}^{-}} \times p})_{[i,\underline{\mathcal{P}}^{-}_i]} \gets \)\texttt{NA}
\EndFor
\algstore{syndata}
\end{algorithmic}
\end{algorithm}

\begin{algorithm}                     
\begin{algorithmic} [1]              
\algrestore{syndata}
\State \(\bm{Y}_{N^{\mathrm{test}}_k \times p} \gets (\bm{Y}_{N_k \times p})_{[\mathcal{S}^{\mathrm{test}}_k]}: \forall k \in \{1, 2, 3\}\)
\State \(\bm{Y}_{N^{\mathrm{test}} \times p} \gets \{\bm{Y}_{N_1^{\mathrm{test}} \times p}, \bm{Y}_{N_2^{\mathrm{test}}\times p}, \bm{Y}_{N_3^{\mathrm{test}}\times p}\}\)
\For{\(i \in \{ 1, \ldots, N^{\mathrm{test}}\}\)}
\State \(\mathcal{P}^{\star +}_i \sim \mathrm{Categorical}(\frac{1}{p}\bm{1}_p^{\top})\)
\State \(\mathcal{P}^{-}_i = \{ 1, \ldots, p \} - \mathcal{P}^{\star +}_i\)
\State \(\underline{\mathcal{P}}^{-}_i = \mathrm{sample \ } \lfloor{(p-1) \times o\rfloor} \mathrm{\ elements \ from \ } \mathcal{P}^{-}_i\) without replacement
\State \((\bm{Y}_{N^{\mathrm{test}} \times p})_{[i,\underline{\mathcal{P}}^{-}_i]} \gets \)\texttt{NA}
\EndFor
\State \(\bm{Y}_{N^{\mathrm{train}}_k \times p} \gets (\bm{Y}_{N_k \times p})_{[\mathcal{S}^{\mathrm{train}}_k, \cdot]}: \forall k \in \{1, 2, 3\}\)
\State \(\bm{Y}_{N^{\mathrm{train}} \times p} \gets \{\bm{Y}_{N_1^{\mathrm{train}} \times p}, \bm{Y}_{N_2^{\mathrm{train}}\times p}, \bm{Y}_{N_3^{\mathrm{train}}\times p}\}\)
\end{algorithmic}
\end{algorithm}

Additional notation is required to describe the algorithm used for generating synthetic data.  Much of the notation is the same as the other text provided, however in some cases it was necessary to change the notation in order to provide a better description for this context.  The fraction of data missing from \(\bm{Y}_{N^{\mathrm{train}^{-}} \times p}\) and \(\bm{Y}_{N^{\mathrm{test}} \times p}\) is \(o\).  When missing observations are generated, the code ensures one of the discriminants is guaranteed to remain in order to reduce computation time relative to random deletion where the possibility of deleting all data for a single observation is a possibility.  The full set of training data \(\{\bm{Y}_{N^{\mathrm{train}} \times p }, \bm{Y}_{N^{\mathrm{train}^{-}} \times p }\}\) has a proportion of missing data less than \(o\).

The additional nomenclature follows.  \(N^{\mathrm{train}}, N^{\mathrm{train}^{-}}\), and \( N^{\mathrm{test}}\) are the integer sizes of the total training set of full observations, training set of partial observations, and testing set respectively, which remain constant across MC iterations.  Adding the \(k\) subscript, as in \(N^{\mathrm{train}}_k\) indicates the integer number specific to the \(k^{\mathrm{th}}\) event category. \(\bm{\mu}^{\star}_k\) is the true mean for the \(k^{\mathrm{th}}\) event category.  \(\mathcal{N}_{p,n}(\bm{M}, \bm{\Sigma} \otimes \bm{\Psi})\) is the matrix variate normal distribution \citep{gupta2018matrix}.  \(N_k\), within the algorithm, is the total number of training and testing observations in the \(k^{\mathrm{th}}\) event category.  \(\mathrm{Categorical}(\bm{p}^{\top})\) defines \(N\) draws from the categorical distribution, the number of categories is defined by the length of the vector of probabilities \(\bm{p}\).  \(k^{\star}\) is the randomly selected category of interest, used to define the binary decision problem, calculating false negatives, and calculating false positives.

\(\mathcal{W}^{-1}(m, \bm{\Psi})\) is the invese Wishart distribution, also known as the inverted Wishart distribution \citep{gupta2018matrix}.  \(\mathrm{dim}(\bm{x})\) defines the dimension of a vector \(\bm{x}\).  The random variable \(b\) indicates the number of independent blocks in random covariance matrix \(\bm{\Sigma}^{\star}_k\).  If \(b = 2\), then the random vectors \(\bm{b}^1\) and \(\bm{b}^2\) index the elements of \(\bm{\Sigma}^{\star}_k\) which correspond to each block.  

The notation \((\bm{X})_{[a,b]}\) indicates a rectangular subset of the matrix \(\bm{X}\), where the selected rows are indicated by \(a\) and the selected columns are indicated by \(b\).  When the subset includes all columns, then the subset is noted as \((\bm{X})_{[a,\cdot]}\), and when a subset of a vector is taken, only one dimension is given in the brackets. \(\mathrm{logit}^{-1}(x)\) denotes the inverse logit function of \(x\), where \(\mathrm{logit}^{-1}(x) = 1/(1 + \exp \{- x\})\).  

The set \(\mathcal{S}^{\mathrm{all}}\) indexes all data in all event categories.  Similarly, \(\mathcal{S}^{\mathrm{train}}\), \(\mathcal{S}^{\mathrm{train}^{-1}}\), and  \(\mathcal{S}^{\mathrm{test}}\) index the training and testing sets.  Adding the \(k\) subscript, as in \(\mathcal{S}_k^{\mathrm{train}}\) indicates indexing specific to the \(k^{\mathrm{th}}\) event category.  The notation \(i^{(k)}\) used in defining \(\mathcal{S}^{\mathrm{all}}\) indicates the \(i^{\mathrm{th}}\) of \(N_k\) observations generated for the \(k^{\mathrm{th}}\) event category.

Algorithm steps using any variant of the notation \(\mathcal{P}\) are related to a particular discriminant.  The subscript \(i\) used in any variant of \(\mathcal{P}_i\) indexes observation \(i\).  \(\mathcal{P}_i^{\star+}\) denotes a discriminant in observation \(i\) which is guaranteed to be part of the final data set. \(\mathcal{P}_i^{-}\) is the set of discriminants in observation \(i\) which have a chance to be missing.  \(\mathcal{P}_i^{\star -}\) is a random sample from \(\mathcal{P}_i^{-}\) which is designated as guaranteed to be missing.  \(\underline{\mathcal{P}}^{-}_i\) is the set of discriminants randomly designated to be missing for observation \(i\) within the final data set.

\subsubsection{Reproducable Code}\label{sec:synthetic-data-code}

In an effort to improve reproducibility, the code used to produce Fig. \ref{fig:syn-boxplot} is provided as a vignette within the \textsf{R} package \texttt{ezECM}.  In order to access the code first install the development version of the package from GitHub with the command \\ \texttt{remotes::install\_github("lanl/ezECM")}, or from CRAN using \\ \texttt{install.packages("ezECM")}.  Then, load the package with the command \texttt{library(ezECM)}.  Last, display the vignette with \texttt{vignette("syn-data-code", package = "ezECM")}, where the code used for generating the data and statistical inference can be inspected.

\subsection{Seismic Discriminant Experiment}\label{sec:dale-data-appendix}

\begin{algorithm}
\caption{Data set generation for a single Monte Carlo simulation using \textbf{Seismic Discriminant Data}}\label{alg:dale-data}
\begin{algorithmic}[1]
\Require \(\bm{Y}_{N^{-} \times p}, \bm{Y}_{N^{+} \times p}\)
\State \(\mathcal{S}^{+} \equiv  \{1^{(1)} , \dots , (N_{\texttt{EX}}^{+})^{(1)}, 1^{(2)}, \dots , (N_{\texttt{SEQ}}^{+})^{(2)}, \dots , 1^{(3)}, \dots , (N_{\texttt{DEQ}}^{+})^{(3)} \}\)
\State \(\mathcal{S}^{\mathrm{train}} \sim \mathrm{sample \ }\lfloor 0.8 \times ( N^{+}_{\texttt{EX}} + N^{+}_{\texttt{SEQ}}) \rfloor \mathrm{\ elements \ from \ } \mathcal{S}^{+} - \{ 1^{(3)}, \dots , (N_{\texttt{DEQ}}^{+})^{(3)} \}\) without replacement
\State \(\mathcal{S}^{\mathrm{train}} \gets \{\mathcal{S}^{\mathrm{train}},\{ 1^{(3)}, \dots , (N_{\texttt{DEQ}}^{+})^{(3)} \} \}\)
\State \(\mathcal{S}^{-} \equiv  \{1^{(1)} , \dots , (N_{\texttt{EX}}^{-})^{(1)}, 1^{(2)}, \dots , (N_{\texttt{SEQ}}^{-})^{(2)}, \dots , 1^{(3)}, \dots , (N_{\texttt{DEQ}}^{-})^{(3)} \}\)
\State \(\mathcal{S}^{\mathrm{train}^{-}} \sim \mathrm{sample \ } \left\lfloor (N_{\texttt{EX}}^{-} + N_{\texttt{SEQ}}^{-} + N_{\texttt{DEQ}}^{-}) \times \sqrt{p} /(1 + \sqrt{p}) \right\rfloor \mathrm{\ elements \ from \ } \mathcal{S}^{-} \) without replacement
\State \(\mathcal{S}^{\mathrm{test}} \gets \{\mathcal{S}^{-} - \mathcal{S}^{\mathrm{train}^{-}}, \mathcal{S}^{+} - \mathcal{S}^{\mathrm{train}}\}\)
\State \(\bm{Y}_{N^{\mathrm{train}} \times p} \gets (\bm{Y}_{N^{+} \times p})_{[\mathcal{S}^{\mathrm{train}}, \cdot]}\)
\State \(\bm{Y}_{N^{\mathrm{train}^{-}} \times p} \gets (\bm{Y}_{N^{-} \times p})_{[\mathcal{S}^{\mathrm{train}^{-}}, \cdot]}\)
\State \(\bm{Y}_{N^{\mathrm{test}} \times p} \gets \{\bm{Y}_{N^{-} \times p}, \bm{Y}_{N^{+} \times p}\}_{[\mathcal{S}^{\mathrm{test}}, \cdot]}\)
\end{algorithmic}
\end{algorithm}

In this real data set, the elements with missing data do not have to be selected and are truly missing.  Three usable event categories are present in the data set; explosions, shallow earthquakes, and deep earthquakes, denoted as \texttt{EX}, \texttt{SEQ}, and \texttt{DEQ} respectively.  \(\mathcal{S}^{+}\) indexes the fully observed data \(\bm{Y}_{N^{+} \times p}\), and \(\mathcal{S}^{-}\) indexes the partially observed data \(\bm{Y}_{N^{-} \times p}\).  The \texttt{DEQ} category had only enough full observations to train the C-ECM model and therefore all full observations are included in training in line 3.  The remainder of the notation is the same as what is used in Appendix \ref{sec:syn-data-appendix}.

\subsubsection{Data}\label{sec:data}

After installation of the \texttt{ezECM} package, a global variable with the data used in the experiment can be specified using the function \texttt{read.csv(system.file("extdata", "ECM\_validation\_data\_scrubbed.csv", package="ezECM"))[,c("Source.Type", "pLP", "pCF\_EKA", "pCF\_GBA", "pCF\_WRA", "pCF\_YKA")]}. Then the final data set can be obtained by removing any event rows where all discriminants are missing as well as any \texttt{Source.Type} of \texttt{"MEX"}.

\end{document}